\newtheorem{theorem}{Theorem}
\newtheorem{lemma}{Lemma}
\newtheorem{definition}{Definition}
\long\def\cut#1{{}}
\def\diam{{\rm diam}}
\begin{document}
\newcommand{\path}[1]{
  \ifthenelse{\equal{#1}{}}{
    \Pi
  }{
    \Pi_{#1}
  }
}

\newcommand{\subpath}[3]{
  \ifthenelse{\equal{#1}{}}{
    \Pi(#2,#3)
  }{
    \Pi_{#1}(#2,#3)
  }
}
\newcommand{\sublen}[3]{
  \ifthenelse{\equal{#1}{}}{
    d_{\path{}}(#2,#3)
  }{
    d_{\path{#1}}(#2,#3)
  }
}
\newcommand{\dist}[2]{
    d(#1,#2)
}
\newcommand{\phase}[1]{phase #1}
\title{
%
%
Capturing an Evader in Polygonal Environments: \\
A Complete Information Game\thanks{
The research on this paper was supported in part by the National Science Foundation
Grant IIS-0904501. A preliminary version of this paper appeared at the 25th 
Conference on Artificial Intelligence.}}

\author{Kyle Klein}
\author{Subhash Suri}
\affil{
Department of Computer Science\\
UC Santa Barbara, CA 93106, USA\\
\{kyleklein, suri\}@cs.ucsb.edu 
}
\date{}
\maketitle

\begin{abstract}

Suppose an unpredictable evader is free to move around in a polygonal environment of 
arbitrary complexity that is under full camera surveillance. How many pursuers, each 
with the same maximum speed as the evader, are necessary and sufficient to guarantee 
a successful capture of the evader? The pursuers always know the evader's current 
position through the camera network, but need to physically reach the evader to 
capture it. We allow the evader the knowledge of the current positions of all the pursuers 
as well---this accords with the standard worst-case analysis model, 
but also models a practical situation where the evader has ``hacked'' into the 
surveillance system.
Our main result is to prove that \emph{three} pursuers are always \emph{sufficient 
and sometimes necessary} to \emph{capture} the evader. The bound is independent of 
the number of vertices or holes in the polygonal environment. The result should be 
contrasted with the \emph{incomplete} information pursuit-evasion where at least 
$\Omega (\sqrt{h} + \log n)$ pursuers are required~\cite{Guibas99} just for detecting 
the evader in an environment with $n$ vertices and $h$ holes.
\cut{
In fact, ours is one of the few worst-case bounds known for \emph{capturing} the evader, 
not just detecting it, under the visibility-based pursuit model.
}
\end{abstract}

\section{Introduction}

Pursuit-evasion games provide an elegant setting to study algorithmic and strategic
questions of exploration or monitoring by autonomous agents. Their mathematical
history can be traced back to at least 1930s when Rado posed the now-classical Lion-and-Man
problem: a lion and a man in a closed arena have equal maximum speeds; what tactics should
the lion employ to be sure of his meal? The problem was settled by Besicovitch who showed
that the man can escape regardless of the lion’s strategy. An important aspect of this
pursuit-evasion problem, and its solution, is the assumption of continuous time: each
player’s motion is a continuous function of time, which allows the lion to get arbitrarily
close to the man but never capture him. If, however, the players move in discrete time steps,
taking alternating turns but still in continuous space, the outcome is different, as first
conjectured by Gale \cite{Guy91} and proved by Sgall \cite{sgall_lion}.

A rich literature on pursuit-evasion problem has emerged since these initial investigations,
and the problems tend to fall in two broad categories: discrete space, where the pursuit
occurs on a graph, and continuous space, where the pursuit occurs in a geometric space.
Our focus in this paper is on the latter: visibility-based pursuit in a polygonal environment
in two dimensions. There exist simply-connected $n$-gons that may require $\Omega (\log n)$ 
pursuers in the worst-case to detect a single, arbitrarily fast moving evader, and
$O(\log n)$ pursuers also always suffice for all $n$ vertex simple polygons~\cite{Guibas99}.
When the polygon has $h$ holes, the number of necessary and sufficient pursuers turns 
out to be $O(\sqrt{h} + \log n)$~\cite{Guibas99}. However, these results hold only 
for detection of the evader, \emph{not for the capture.}

For capturing the evader, it is reasonable to assume that the pursuers and the evader all 
have the same maximum speed.  Under this assumption, it is shown by Isler et 
al.~\cite{Isler05} that two pursuers can capture the evader in a \emph{simply-connected}
polygon using a \emph{randomized} strategy whose expected search time is polynomial in $n$ and the 
diameter of the polygon.  When the polygon has holes, no non-trivial upper bound is
known for capturing the evader. For instance, we do not even know if $O(h)$ pursuers
are able to capture the evader. Because visibility-based pursuit allows \emph{unbounded} 
line-of-sight visibility regardless of the distance, it is unclear how to map a 
detection strategy to a capture strategy.

In this paper, we attempt to \emph{disentangle} these two orthogonal issues inherent in pursuit 
evasion: \emph{localization}, which is purely an informational problem, and \emph{capture}, 
which is a problem of planning physical moves. In particular, we ask how complex is the 
capture problem \emph{if the evader localization is available for free}? In other 
words, suppose the pursuers have complete information about the evader's current position, 
how much does it help them to capture the evader? Besides being a theoretically interesting 
question, the problem is also a reasonable model for many practical settings. Given the rapidly 
dropping cost of electronic surveillance and camera networks, it is now both technologically 
and economically feasible to have such monitoring capabilities. These technologies enable 
cheap and ubiquitous detection and localization, but in case of intrusion, a physical capture 
of the evader is still necessary.

\subsection{Our Contribution}

Our main result is that under such a complete information setting, \emph{three pursuers} are 
always sufficient to capture an equally fast evader in a polygonal environment with holes,
using a \emph{deterministic} strategy.
The bound is independent of the number of vertices $n$ or the holes of the polygon, although 
the capture time depends on both $n$ and the diameter of the polygon. Complementing this upper 
bound, we also show that there exists polygonal environments that require at least
three pursuers to capture the evader even with full information.

\subsection{Related Work} \label{sec:RelatedWork}

There is an enormous literature on pursuit evasion and related 
problems~\cite{aigner,dan-graph,cop1,robot-rabbit,Isler06,isler,lapaugh,Park01,Par76,SacRajLav04,Suzuki92}.
The research tends to fall into two distinct categories: 
geometry-based and graph-based. The former assumes a continuous model of space, typically a 
polygon, while the latter assumes a discrete graph model where agents move along edges. 
The graphs provide a very general setting but can suffer from two shortcomings: one, the 
generality leads to weak upper bounds and, two, they fail to model many restrictions imposed by
the geometry of physical world. Thus, for instance, determining the search
number of cop-number or a general graph remains a difficult open problem despite decades 
of research.

In visibility-based pursuit, a seminal paper~\cite{Guibas99} shows that $\Theta(\log n)$ 
pursuers are both necessary and sufficient in worst-case for a \emph{simply-connected} 
$n$-vertex polygon. Most of the existing work in polygon searching, however, is on
\emph{detection} and not capture. The only relevant result on capture is by 
Isler et al.~\cite{Isler05} showing that in \emph{polygons without holes} two pursuers 
can achieve both detection and capture. When the environment has holes, it is not even 
known how many pursuers are sufficient to capture an evader, even though a tight bound
of $\Theta( \sqrt{h} + \log n)$ for detection is known. In one important aspect,
polygon searching is fundamentally different from graph searching: 
\emph{re-contamination} is unavoidable in polygons, in general, while graphs 
can always be searched \emph{optimally} without re-contamination~\cite{Guibas99}.

Our work bears some resemblance to, and is inspired by, the result of Aigner 
and Fromme~\cite{aigner} on planar graphs, showing that graph searching on
planar graph requires 3 cops.  In that work, the graph is unweighted,
does not deal with Euclidean distances, and require players to
move to only neighboring nodes. Unlike the graph model, our search occurs in 
continuous Euclidean plane, and players can move to any position within distance one.
Thus, while our bounds are similar, the proof techniques and technical details
are quite different.

\section{The Problem Formulation} \label{sec:TheProblem}

We assume that an evader $e$ is free to move in a two-dimensional closed polygon $P$, 
which has $n$ vertices and $h$ holes. A set of pursuers, denoted $p_1, p_2, \ldots$, 
wish to capture the evader. All the players have the same
maximum speed, which we assume is normalized to 1. The bounds in our algorithm depend 
on the number of vertices $n$ and the diameter of the polygon, $\diam(P)$, which is 
the maximum distance between any two vertices of $P$ under the shortest path metric.

For the sake of notational brevity, we also use $e$ to denote the current position of 
the evader, and $p_i$ to denote the position of the $i$th pursuer. We model the 
pursuit-evasion as a continuous space, discrete time game: the players can move anywhere 
inside the polygon $P$, but they take turns in making their moves, with the
evader moving first.  In each move, a player 
can move to any position whose shortest path distance from its current position is at most 
one; that is, within \emph{geodesic disk} of radius one.  On the pursuers' move, all the 
pursuers can move simultaneously and independently.  We say that $e$ is successfully captured
when some pursuer $p_i$ becomes collocated with $e$.

In order to focus on the complexity of the capture, we assume a complete
information setup: each pursuer knows the location of the evader at all times.
We also endow the evader the same information, so $e$ also knows the locations of all
the pursuers. In addition, both sides know the environment $P$, but neither
side knows anything about the \emph{future} moves or strategies of the other side.
We begin with a high level description of the capture strategy, followed by 
its technical details and proof of correctness in the next section.

\subsection{The High Level Strategy for Capture}

We show that three pursuers, denoted $p_1, p_2, p_3$, can always capture an evader
using a deterministic strategy,
regardless of the evader's strategy and the geometry of the environment.
Our overall strategy is to progressively trap the evader in an ever-shrinking
region of the polygon $P$. The pursuit begins by first choosing a path $\path{1}$
that divides the polygon into sub-polygons (see Figure~\ref{fig:polygonPath})---we will use the notation $P_e$
to denote the sub-polygon containing the evader. We show that, after an 
initialization period, the pursuer $p_1$ can successfully guard the 
path $\path{1}$, meaning that $e$ cannot move across it without being captured.

In a general step, the sub-polygon $P_e$ containing the evader is bounded by two 
paths $\path{1}$ and $\path{2}$, satisfying a geometric property called \emph{minimality}, 
each being guarded by a pursuer. We then choose a third path $\path{3}$ splitting the 
region $P_e$ into two non-empty subsets.  If both regions have holes, then we argue 
that the pursuer $p_3$ can guard $\path{3}$, thereby trapping $e$ either between 
$\path{1}$ and $\path{3}$ (see Figure~\ref{fig:confine}), or between $\path{2}$ and 
$\path{3}$, in which case the pursuit iterates in a smaller region.  If one of the 
regions is hole-free, then we show that the pursuer $p_3$ can \emph{evict} the evader from 
this region, forcing it into a smaller region where the search resumes.

\subsection{Visibility Graphs and Path Guarding}

In order for this strategy to work, the paths $\path{i}$ need to be carefully chosen
and must satisfy certain geometric conditions, which we briefly explain.
First, although the pursuit occurs in continuous space, our paths will be computed
from a \emph{discrete} space, namely, the \emph{visibility graph} of the polygon.
The visibility graph $G(P)$ of a polygon $P$ is defined as follows:
	the nodes are the vertices of the polygonal environment
(including the holes), and two nodes are joined by an edge if the line segment
joining them lies entirely in the (closed) interior of the polygon.
(In other words, the two vertices joined by an edge must have line of sight visibility.)
This \emph{undirected} graph has $n$ vertices and at most $O(n^2)$ edges. 
We assign each edge a \emph{weight} equal to the Euclidean distance between 
its two endpoints. See Figure~\ref{fig:polygonPath} for an example.
\begin{figure}[htb]
\begin{center}

\subfigure[]{
\includegraphics[scale=.60]{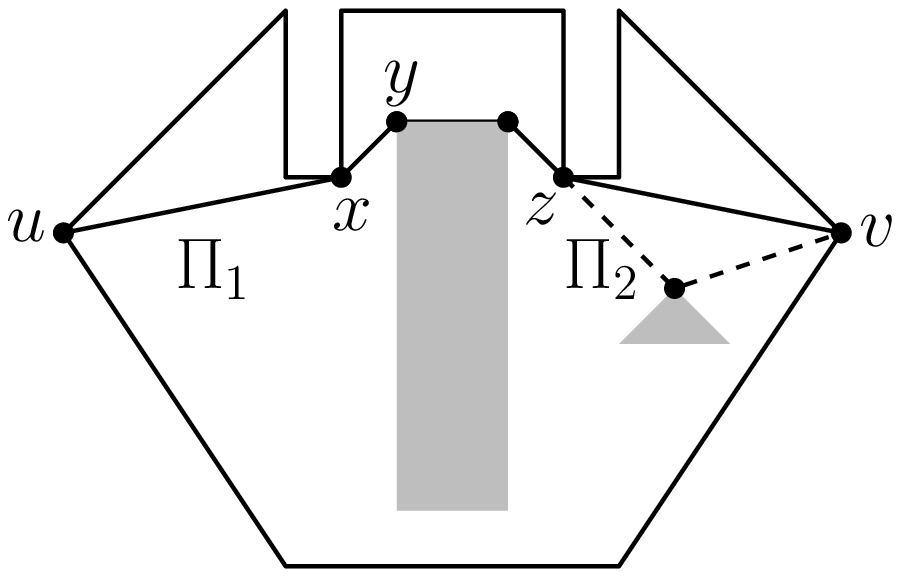}
\label{fig:polygonPath}
}	\hspace*{2cm}
\subfigure[]{
\includegraphics[scale=.60]{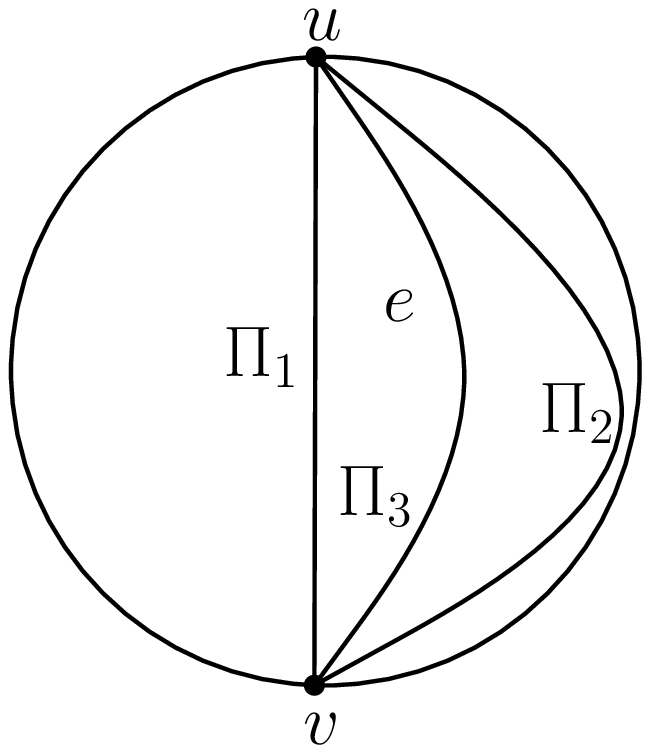}
\label{fig:confine}
}
\end{center}
\caption[toc entry]{(a) A polygonal environment with two holes
(a rectangle and a triangle). $xy$ is a visibility edge of $G(P)$, while
$xz$ is not. $\path{1}$ and $\path{2}$ are the first and the second
shortest paths between anchors $u$ and $v$.
The figure (b) illustrates the main strategy of trapping the evader
through three paths. }
\label{fig:mainStrategy}
\end{figure}

One can easily see that, given two vertices $u$ and $v$ of $P$, the \emph{shortest
path} from $u$ to $v$ in $G(P)$ is also the shortest Euclidean path constrained
to lie inside $P$. (The shortest Euclidean path has corners only at vertices of $G(P)$.)
However, we cannot make such a claim for the \emph{second}, or in general the 
$k$th, shortest path---one can create an infinitesimal ``bend'' in the shortest 
path $\path{1}$ to create another path that is arbitrarily close to the first shortest 
path but does not belong to $G(P)$.
Therefore, we will only consider paths that belong to $G(P)$ and are ``combinatorially 
distinct'' from $\path{1}$---that is, they differ in at least one visibility 
edge. However, even then the $k$th shortest path between two nodes can exhibit 
counter-intuitive behavior. For instance, while in graphs with non-negative weights
the first shortest path is always loop-free, the \emph{second}, or more 
generally $k$th, shortest path can have loops---this may happen if repeatedly 
looping around a small-weight cycle (to make the path distinct from others) 
is cheaper than taking a different but expensive edge~\cite{suri-focs}. 
Therefore, we will consider only shortest loop-free paths. One of our technical lemmas 
proves that these paths are also \emph{geometrically} non-self-intersecting. 
(This is obvious for the shortest path $\path{1}$ but not for subsequent paths.)
In addition, we argue that these paths also satisfy a key geometric property, 
called \emph{minimality}, which allows a pursuer to guard them against an evader.

\section{Proof of Sufficiency of 3 Pursuers}

We begin with the discussion of how a single pursuer can guard a path in $P$, trapping 
the evader on one side. We then discuss the technically more challenging case of guarding 
the second and the third paths.
In order to guarantee that a path in $P$ can be guarded, it must satisfy certain
geometric properties. We begin by introducing two key ideas: a \emph{minimal path} and
the \emph{projection} of evader on a path.
In the following, we use the notation $\dist{x}{y}$ to denote the shortest path
distance between points $x$ and $y$. When we require that distance to be measured within 
a subset, such as restricted to a path $\path{}$, we write $\sublen{}{x}{y}$.
That is, $\sublen{}{x}{y}$ is the length of path $\path{}$ between its points
$x$ and $y$.  Occasionally, we also use the notation  $\subpath{}{x}{y}$ to denote subpath of
$\path{}$ between points $x,y$. We use the notation $x \prec y$ to emphasize that the 
point $x$ precedes $y$ on the path $\path{}$: that is, if $\path{}$ is the
path from node $u$ to node $v$, then $x \prec y$ means that $\sublen{}{u}{x} < \sublen{}{u}{y}$.
The following property is important for patrolling of paths.

\begin{definition}
{({\bf Minimal Path:})} Suppose $\path{}$ is a path in $P$ dividing it into two
        sub-polygons, and $P_e$ is the sub-polygon containing the evader $e$. 
	We say that $\path{}$ is \emph{minimal} if, for all points $x,z \in \path{}$ and 
	$y \in (P_e \setminus \path{})$, the following holds:
	$$ \sublen{}{x}{z} ~\leq~ \dist{x}{y} + \dist{y}{z} $$
\end{definition}

Intuitively, a minimal path cannot be shortcut: that is, for any two points
on the path, it is never shorter to take a detour through an interior point 
of $P_e$. (This is a weak form of triangle inequality, which excludes
detours only through points contained in $P_e.)$
The next definition introduces the projection of the evader on to a path,
which is an important concept in our algorithm.

\begin{definition}
{(~{\bf Projection:})} 	Suppose $\path{}$ is a path in $P$ dividing it into two
	sub-polygons, and $P_e$ is the sub-polygon containing the evader $e$. Then, the 
	\emph{projection} of $e$ on $\path{}$, denoted $e_\pi$, is a point on $\path{}$ such 
	that, for all $x \in \path{}$, $e$ is no closer to $x$ than is $e_\pi$.
\end{definition}

Thus, if a pursuer is able to position itself at the projection of $e$ at all times,
then it guarantees that the evader cannot cross the path without being captured.
With these definitions in place, we now discuss how to guard the first path $\path{1}$.

\subsection{Guarding the First Path}

We choose two non-neighbor vertices $u$ and $v$ on the outer boundary of $P$, and call 
them \emph{anchors}. We let $\path{1}$ be the shortest path from $u$ to $v$ in $G(P)$;
this is also the shortest Euclidean path between $u$ and $v$ constrained to lie inside 
the environment. Our first observation is that this path $\path{1}$ is always
minimal.

\begin{lemma} 
\label{lem:minimal}
The path $\path{1}$ between $u$ and $v$ is minimal.
\end{lemma}

\begin{proof}
For the sake of contradiction, suppose there are two points $x,z \in \path{1}$
that violate the minimality. Let the point $y \notin \path{1}$ be the witness of
this violation.
That is, $\dist{x}{y} + \dist{y}{z} < \sublen{1}{x}{z}$. But then $\path{1}$ is
not the shortest path from $u$ to $v$, because its subpath $\subpath{1}{x}{z}$
is sub-optimal.         This completes the proof.  
\end{proof}

The following lemma shows that the projection of $e$ is well-defined for a minimal path.

\begin{lemma} \label{lem:pathEqualDistance} Suppose $\path{}$ is a
  minimal path between the anchor nodes $u$ and $v$.  Then, for every
  position of the evader $e$ in $P_e$, the projection $e_\pi$ exists.
  In fact, point on $\path{}$ at distance $\dist{e}{u}$ from $u$ along
  the path is a projection of $e$.
\end{lemma}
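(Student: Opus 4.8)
The plan is to exhibit the specific point promised by the statement and verify the defining inequality for it directly. Parametrize $\path{}$ by arc length measured from $u$, writing $s(x) = \sublen{}{u}{x}$ for $x \in \path{}$, so that $s(u)=0$ and $s(v) = L$, where $L := \sublen{}{u}{v}$. Let $w \in \path{}$ be the point with $s(w) = \dist{e}{u}$ (the case $\dist{e}{u} > L$, where this prescription overshoots the path, is handled separately below). Rather than the literal requirement in the definition of projection, I would prove the stronger inequality $\sublen{}{w}{x} \leq \dist{e}{x}$ for every $x \in \path{}$; since a geodesic distance is dominated by an arc-length distance we have $\dist{w}{x} \le \sublen{}{w}{x}$, so this certainly implies that $e$ is no closer to $x$ than $w$ is, and it is moreover exactly the bound that a path-bound pursuer needs in order to guard $\path{}$.

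First I would fix $x \in \path{}$ with $s(x)=t$ and observe that $\sublen{}{w}{x} = |t - \dist{e}{u}|$, so everything reduces to showing $|t-\dist{e}{u}| \le \dist{e}{x}$, which I split on the sign of $t - \dist{e}{u}$. If $t \ge \dist{e}{u}$, I invoke minimality (the definition above) applied to the two path points $u, x \in \path{}$ with witness $y = e \in P_e \setminus \path{}$, which gives $t = \sublen{}{u}{x} \le \dist{e}{u} + \dist{e}{x}$, i.e.\ $t - \dist{e}{u} \le \dist{e}{x}$. If instead $t < \dist{e}{u}$, minimality is not even needed: the ordinary triangle inequality gives $\dist{e}{u} \le \dist{e}{x} + \dist{x}{u}$, and since an arc-length distance dominates the geodesic one, $\dist{x}{u} \le \sublen{}{u}{x} = t$, whence $\dist{e}{u} - t \le \dist{e}{x}$. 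Combining the two cases yields $\sublen{}{w}{x} \le \dist{e}{x}$ for all $x$, so $w$ is a projection and in particular a projection exists.

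The one genuine subtlety, and the step I expect to be the main obstacle, is existence of the point $w$ itself: the prescription $s(w)=\dist{e}{u}$ names a point of $\path{}$ only when $\dist{e}{u} \le L$, and a short minimal path cutting off a large $P_e$ can force $\dist{e}{u} > L$. I would dispose of this by clamping, taking $w = v$ whenever $\dist{e}{u} > L$ and checking that $v$ is then still a valid projection: by the reverse triangle inequality $\dist{e}{x} \ge \dist{e}{u} - \dist{x}{u} \ge \dist{e}{u} - t \ge L - t = \sublen{}{x}{v}$, which is precisely $\sublen{}{w}{x} \le \dist{e}{x}$ with $w=v$. (The degenerate case $e \in \path{}$, which is excluded from minimality's witness set, is immaterial, since there the evader already sits on the guarded path.) The interior inequality is routine once minimality is available; it is only this boundary behavior, easy to overlook in the statement as phrased, that needs care.
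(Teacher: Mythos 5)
Your proof is correct and takes essentially the same approach as the paper's: the paper likewise shows that the point at distance $\dist{e}{u}$ along $\path{}$ is a projection by applying minimality with witness $e$ to points beyond it and the plain triangle inequality to points before it, clamping to $v$ when $\dist{e}{u} > \sublen{}{u}{v}$. Your write-up is marginally cleaner in arguing directly rather than by contradiction, and in actually verifying the clamped case (which the paper adopts only ``as a convention''), while the paper additionally includes a redundant preliminary existence argument that your explicit construction makes unnecessary.
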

\begin{proof}
On the path $\path{}$, starting from $u$, we pick the furthest point $z$ such that for all 
$x \prec z$ we have $\sublen{}{x}{z} \leq \dist{x}{e}$. We note that necessarily for some 
$x\neq z$, this must be equality because otherwise there exists a point farther along $\path{}$
satisfying the inequality.
We claim that $z$ is a projection of $e$. Suppose not. Then there exists a point 
$x \succ z$ such that $\sublen{}{z}{x} > \dist{e}{x}$. This violates the minimality of
$\path{}$ because

\vspace*{-3ex}

\begin{eqnarray*}
\sublen{}{u}{x} & = & \sublen{}{u}{z} + \sublen{}{z}{x} = \dist{u}{e} + \sublen{}{z}{x} 
	 	~ > ~ \dist{u}{e} + \dist{e}{x}
\end{eqnarray*}

Next, we show that the point at $\dist{e}{u}$ along $\path{}$ is a
projection of $e$.  In case we have $\dist{e}{u}>\sublen{}{u}{v}$, we
choose $v$ as the projection point as a convention. Otherwise,
consider the point $z$ that lies on $\path{}$ at distance
$\dist{e}{u}$ from $u$. In order to show that $z$ is a valid and
unique projection, it suffices to show that for any other point $x$ on
$\path{}$, the point $z$ is closer to it than $e$ is.  Indeed, if $x$
is such that $z \prec x$, then $\sublen{}{z}{x} > \dist{e}{x}$ would
contradict the minimality of $\path{}$:

\vspace*{-3ex}

\begin{eqnarray*}
\sublen{}{u}{x} & = & \sublen{}{u}{z} + \sublen{}{z}{x} ~=~
\dist{u}{e} +  \sublen{}{z}{x} ~>~ \dist{u}{e} +  \dist{e}{x}
\end{eqnarray*}

Similarly, if  $x$ is such that $x \prec z$, then $\sublen{}{x}{z} > \dist{x}{e}$
contradicts the hypothesis that $y$ satisfies $\dist{e}{u} = \sublen{}{u}{z}$:

\vspace*{-3ex}

\begin{eqnarray*}
\dist{e}{u} & \leq & \dist{e}{x} + \sublen{}{x}{u} ~<~ \sublen{}{z}{x} + \sublen{}{x}{u} ~=~ 
\sublen{}{z}{u}
\end{eqnarray*}

Therefore, the chosen point $z$ is closer to all points on $\path{}$
than $e$ is, and hence it is a projection of $e$.  This completes the
proof.
\end{proof}

The next lemma shows how a pursuer can guard a minimal path.

\begin{lemma} \label{lem:singlepath}
Suppose $\path{}$ is a minimal path between the anchors $u,v$ in $P$, and a pursuer $p$ 
is located at the current projection of $e$. 
Suppose on its turn the evader moves from $e$ to $e'$.
Then, the pursuer $p$ can either capture the evader or relocate to the new projection 
$e'_{\pi}$ in one move.
\end{lemma}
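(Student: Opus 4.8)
My plan is to reduce everything to the arc-length characterization of the projection established in Lemma~\ref{lem:pathEqualDistance}, namely that $e_\pi$ is the point lying on $\path{}$ at path-distance $\dist{e}{u}$ from $u$ (with $v$ used as the projection whenever $\dist{e}{u} > \sublen{}{u}{v}$). I would then split the analysis of the evader's move $e \to e'$, where $\dist{e}{e'} \leq 1$, into two cases according to whether $e'$ remains in the interior of $P_e$ or the move carries the evader onto or across $\path{}$.

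In the first case (the evader stays in $P_e$), the new projection $e'_\pi$ exists by Lemma~\ref{lem:pathEqualDistance} and sits at path-distance $\dist{e'}{u}$ from $u$. The triangle inequality gives $|\dist{e'}{u} - \dist{e}{u}| \leq \dist{e}{e'} \leq 1$, so the two projection points are separated along the path by $\sublen{}{e_\pi}{e'_\pi} = |\dist{e'}{u} - \dist{e}{u}| \leq 1$. Since the geodesic distance never exceeds the length of any connecting path, $\dist{e_\pi}{e'_\pi} \leq \sublen{}{e_\pi}{e'_\pi} \leq 1$, and the pursuer can relocate to $e'_\pi$ in a single move. I would note that the convention capping the projection at $v$ only shrinks this displacement, so the bound survives the boundary case $\dist{e'}{u} > \sublen{}{u}{v}$ unchanged.

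In the second case the pursuer captures instead of relocating. If $e'$ does not lie in the interior of $P_e$, then since $e \in P_e$ and $\path{}$ separates $P$, the shortest path from $e$ to $e'$ must meet $\path{}$ at some point $x$, and because $x$ lies on this geodesic we have $\dist{e}{x} \leq \dist{e}{e'}$. The defining property of the projection (the pursuer at $e_\pi$ is no farther from any point of $\path{}$ than $e$ is) yields $\dist{e_\pi}{x} \leq \dist{e}{x}$. Combining these, $\dist{e_\pi}{e'} \leq \dist{e_\pi}{x} + \dist{x}{e'} \leq \dist{e}{x} + \dist{x}{e'} = \dist{e}{e'} \leq 1$, so the pursuer can move directly from $e_\pi$ onto $e'$ and capture the evader.

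The main obstacle I anticipate is the rigorous justification of the crossing step in the second case: that any move leaving $P_e$ forces the evader's shortest route to cross the separating path $\path{}$ at a point $x$ with $\dist{e}{x} \leq \dist{e}{e'}$, and that this crossing point lies on $\path{}$ so that the projection inequality applies to it. Everything else is a direct application of the triangle inequality together with Lemma~\ref{lem:pathEqualDistance} and the projection definition; I would therefore devote the care to the topological separation argument and to stating the anchor conventions precisely, since those are where a careless treatment could hide a gap.
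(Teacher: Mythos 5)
Your proof is correct, but its core case runs along a genuinely different line than the paper's. The crossing case is essentially identical in both: the evader's route to $e'$ meets $\Pi$ at a point $x$ with $d(e,x)+d(x,e')\le 1$, and the projection property $d(e_\pi,x)\le d(e,x)$ places $e'$ within distance one of the pursuer, so it is captured. For the non-crossing case, however, the paper does not reuse Lemma~\ref{lem:pathEqualDistance}; it argues directly from minimality: for points $x \prec e_\pi \prec y$ it writes $d(x,e')=d_\Pi(x,e_\pi)+c$ and $d(y,e')=d_\Pi(y,e_\pi)+c'$, derives $c+c'\ge 0$ from the minimality inequality, and then locates the new projection at the point shifted by $|c|<1$ toward whichever side the evader moved closer to. Your alternative---both projections lie at arc lengths $d(e,u)$ and $d(e',u)$ from $u$, hence are separated along $\Pi$ by $|d(e',u)-d(e,u)|\le d(e,e')\le 1$ by the triangle inequality, and geodesic distance is at most subpath length---is shorter and more transparent, because it exploits the quantitative characterization already proved in Lemma~\ref{lem:pathEqualDistance} instead of re-deriving the projection's displacement from minimality; it also dispatches the boundary convention at $v$ cleanly. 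What the paper's version buys in exchange is independence from that arc-length formula (it works straight from the definition of projection plus minimality, and pinpoints on which side the new projection lies), but as a proof of the stated lemma your reduction is sound, and the separation step you flag is routine: since $e\in P_e$ and $e'$ is not interior to $P_e$, any connecting path, in particular a geodesic realizing $d(e,e')$, must meet the separating path $\Pi$.
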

\begin{proof}
First, suppose that the new position $e'$ is on different side of the path $\path{}$ than $e$;
that is, the evader crosses the path. Because $e$ moves a distance of at most one, and suppose 
it crosses the path at a point $z$, we have $\dist{e}{z} + \dist{z}{e'} \leq 1$.
On the other hand, since $p$ is located at the projection of $e$ before the move,
$\sublen{}{p}{z} \leq \dist{e}{z}$. Therefore, the new position of the evader $e'$ is within 
distance one of $p$, and the pursuer can capture the evader on its move.

Therefore, assume that the evader does not cross the path, and moves to a position $e'$
such that $e_\pi \neq e'_\pi$.  Consider any two points $x \prec e_\pi \prec y$
(one on either side of the projection), and let 
$\dist{x}{e'} = \sublen{}{x}{e_\pi} + c$ and $\dist{y}{e'} = \sublen{}{y}{e_{\pi}} + c'$. 
Then we claim that $c + c' \geq 0$. We observe that by the minimality of $\path{}$,
the following holds, which in turn implies that $c+c'\geq 0$.

\vspace*{-2ex}

\begin{eqnarray*}
\sublen{}{x}{y} & \:\leq\: & \dist{x}{e'} + \dist{e'}{y} \:=\: \sublen{}{x}{e_{\pi}} + 
		\sublen{}{y}{e_{\pi}} +c + c'  \\
		& \:=\: & \sublen{}{x}{y} +c +c'
\end{eqnarray*}

\vspace*{-1ex}

Thus $e$ can move closer to $x$ or $y$ but not both.
Suppose it moves closer to $x$, meaning $c < 0$. 
Consider some $x \prec e_{\pi}$ with the smallest value of $c$.
Then suppose for any $y \succ e_{\pi}$, $c' < |c|$. 
This would imply $c + c' < 0$, and so if $e$ moves a maximum of $c$ closer to any $x$, 
it moves at least $c$ farther from any $y$. We claim that $e'_{\pi}$ is at the point 
$c$ closer to $x$. To show this we argue that at the position $e'_{\pi}$, the pursuer 
is closer to all points on $\path{}$. But $e'$ cannot be closer to any $x \prec e'_{\pi}$
because $e'$ is no more than $c$ closer than $e$ to any such $x$. Similarly $e'$ must still be farther from
any $y \succ p_e$ because $e$ moved $c$ or more away from those points. 
This leaves the points between $e'_{\pi}$ and $e_{\pi}$, but if $e'$ is closer to one of 
these points (say, $z'$), then $p$ can capture $e$ because $\dist{e}{z'} < \dist{e'_{\pi}}{z'}$.
Therefore, the pursuer can move to the new projection because $|c| < 1$.
This completes the proof.   
\end{proof}

Finally, we show that a pursuer $p$ requires $O(\diam(P)^2)$ moves to
either reach the current projection of $e$ or capture it. 

\begin{figure}[htb]
\begin{center}
\includegraphics[scale=.6]{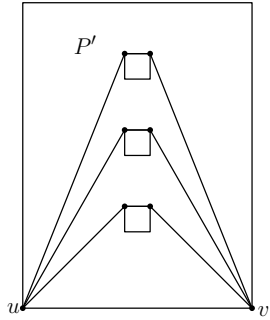}
\end{center}
\caption[toc entry]{Example of sub-polygons $P'$ created with diameter
  larger than $\diam(P)$.} 
\label{fig:growingPaths}
\end{figure}

\begin{lemma} \label{lem:initialize}
Suppose $\path{}$ is a minimal path between anchors $u,v$ in $P$, and a pursuer $p$ is 
located at $u$.  Then in $O(\diam(P)^2)$ moves, $p$ can move to $e$'s projection.
\end{lemma}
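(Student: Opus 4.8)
The plan is to have the pursuer simply walk along $\path{}$, always chasing the current projection $e_\pi$, and to show that this chase ends in a number of moves linear in $\diam(P)$ (hence comfortably within the claimed $O(\diam(P)^2)$). Let $a_t$ denote the arc-length position of the pursuer along $\path{}$ measured from $u$, and let $b_t = \sublen{}{u}{e_\pi}$ be the arc-length position of the projection, which by Lemma~\ref{lem:pathEqualDistance} equals $\min(\dist{e}{u},\sublen{}{u}{v})$. The pursuer's rule is: after the evader moves, step along $\path{}$ toward the new projection by as much arc length as possible, up to $1$; in particular it lands exactly on $e_\pi$ whenever $e_\pi$ is within arc length $1$. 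Since moving an arc length $\delta \le 1$ along $\path{}$ displaces the pursuer by geodesic distance at most $\delta \le 1$, every such move is legal.

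The first step is a one-step invariant: the projection never escapes more than arc length $1$ behind the pursuer. Because $\dist{e}{u}$ is $1$-Lipschitz in the evader's position, a single evader move changes $b$ by at most $1$, so $|b_t - b_{t-1}| \le 1$. Combined with the inductive hypothesis $a_{t-1} \le b_{t-1}$ (the pursuer stays at or behind the projection until it first reaches it), this gives $b_t \ge a_{t-1} - 1$. Hence whenever the projection lies behind the pursuer it is within arc length $1$, so the pursuer's rule snaps exactly onto it and the pursuer has reached $e_\pi$, which is what we want.

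The second step is the progress argument. As long as the pursuer has not yet reached the projection, the case analysis above forces the remaining case $b_t > a_{t-1} + 1$, in which the pursuer advances by exactly $1$ while preserving $a_t \le b_t$. Thus, during every uncaptured round, $a_t = a_{t-1} + 1$, so $a_t = t$. But the pursuer never advances past the projection, so $a_t \le b_t = \min(\dist{e}{u},\sublen{}{u}{v}) \le \dist{e}{u} = O(\diam(P))$, since the geodesic distance from any point of $P$ to the boundary vertex $u$ is bounded by the geodesic diameter. Therefore the number of uncaptured rounds is $O(\diam(P))$, after which $a_t = b_t$ and the pursuer occupies $e_\pi$; this establishes the stated $O(\diam(P)^2)$ bound.

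The main obstacle, and the place to be careful, is the interaction between the two metrics: the pursuer's budget is one unit of \emph{geodesic} distance per move, whereas the quantity that governs reaching the projection is \emph{arc length} along $\path{}$. The crux is the observation that an arc-length step of size at most $1$ is always a legal geodesic move, together with the $1$-Lipschitz behavior of $\dist{e}{u}$, which keeps the projection within reach even when the evader doubles back toward $u$. Once these are in hand, the monotone advance $a_t = t$ and the uniform bound $b_t = O(\diam(P))$ close the argument; the reason one does not beat linear is precisely that the evader can keep the projection exactly one step ahead until it is driven to its maximal distance $\dist{e}{u}$ from $u$, at which point the gap must collapse.
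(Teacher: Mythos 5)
Your proof is correct, but it takes a genuinely different route from the paper's, and in fact it establishes a stronger bound. The paper's pursuer also walks along $\Pi$ chasing the projection, but the paper treats the worst case as the pursuer possibly having to traverse the \emph{entire} path; since $\Pi$ need not be a shortest $u,v$ path, its length can exceed $\diam(P)$, and the bulk of the paper's proof is an area-plus-rescaling argument showing $d_{\Pi}(u,v) \leq \diam(P)^2$, which is exactly where the quadratic bound comes from. You sidestep that issue entirely by invoking the explicit location of the projection given by Lemma~\ref{lem:pathEqualDistance}: the projection always sits at arc length $\min(d(e,u),\, d_{\Pi}(u,v))$ from $u$, and since $d(e,u) = O(\diam(P))$, the target never leaves an $O(\diam(P))$-length prefix of the path no matter how long $\Pi$ is. Combined with your $1$-Lipschitz bound on the motion of the projection (the same fact the paper extracts from Lemma~\ref{lem:singlepath}) and your snap-on invariant $a_{t-1} \leq b_{t-1}$, this gives termination in $O(\diam(P))$ moves, linear rather than quadratic, so the stated $O(\diam(P)^2)$ bound holds a fortiori. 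What each approach buys: the paper's argument needs only that the projection moves slowly, not where it lies, but pays for this with the somewhat delicate rescaling step (changing the unit of measurement mid-argument); yours leans harder on Lemma~\ref{lem:pathEqualDistance} but avoids any bound on the path length and improves the result. Two small points to tighten: $\diam(P)$ is defined over \emph{vertices} while $e$ is an arbitrary point, so you should justify $d(e,u) \leq 2\,\diam(P)$ (e.g., $e$ sees some vertex $w$, the segment $ew$ has length at most the Euclidean diameter of $P$, which is attained at a pair of vertices) rather than asserting it is bounded by the diameter outright; and projections need not be unique in degenerate configurations, so $b_t$ should be read as the \emph{canonical} projection of Lemma~\ref{lem:pathEqualDistance}, which is all your argument actually uses.
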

\begin{proof}
  By Lemma~\ref{lem:singlepath}, the projection of $e$ can only shift
  by distance at most one along the path $\path{}$. Thus, $p$'s
  strategy is simply to move along the path from one end to the other
  until it coincides with the current projection of $e$, or captures
  it. However, as $\path{}$ is not the shortest $u,v$ path,
  $\sublen{}{u}{v}$ may be larger than $\diam(P)$ (an example of
  progressively longer paths is depicted in Figure~\ref{fig:growingPaths}), thus we
  show that $\sublen{}{u}{v} \leq \diam(P)^2$.

  Notice that $\path{}$ splits $P$ into two or more sub-polygons, and
  so let us consider one such sub-polygon $P$'. If $\diam(P') \leq
  area(P')$, then trivially as $area(P') < area(P)$ we can deduce
  $\sublen{}{u}{v} \leq area(P)$ and necessarily $\sublen{}{u}{v} \leq
  \diam(P)^2$. However, suppose that $\diam(P') > area(P')$, we can
  easily reverse this inequality by a simple \emph{rescaling of the
    units}.  In particular, suppose we rescale the unit of measurement
  from $1$ to $1+\alpha$.  This increases the area of a triangle by a
  factor of $(1+\alpha)^2$, while a segment only increases in length
  by a factor of $1+\alpha$. A suitably large choice of $\alpha$,
  therefore, always ensures that the polygon's area exceeds its
  diameter, because the former grows by a factor of $(1+\alpha)^2$
  while the latter grows only by a factor of $1+\alpha$.  Thus, we
  assume that a suitable rescaling has been applied to the polygon
  $P$, ensuring that all sub-polygons $P'$ encountered during the
  algorithm satisfy $diam(P') \leq area(P')$.

  Since $p$ moves a distance of $1$ in each turn, and the path
  $\path{}$ is at most $\diam(P)^2$ long, the entire initialization
  phase takes $O(\diam(P)^2)$ time.  Meanwhile, if the projection ever
  ``crosses over'' the current position of $p$, the pursuer
  immediately can move to the new projection point because at that
  moment it must be within distance one of the target location.  This
  completes the proof.
\end{proof}

\subsection{Geometric Structure of Pursuer Paths}

We now come to the main part of our pursuit strategy. The key idea is to 
progressively trap the evader in a region bounded by two minimal paths, 
which are guarded by two pursuers, and to use the third pursuer to further 
divide the current trap region.  When the third pursuer subdivides the current 
region containing $e$, two possibilities emerge: either both regions of the 
subdivision have holes, in which case we show that the third path is necessarily
minimal and thus guardable by the third pursuer, limiting the evader to 
a smaller region than before; or one of the regions is hole-free, in which 
case the third pursuer uses the capture strategy for a simply-connected polygon to evict 
the pursuer from this region (or capture it). In order to formalize our 
strategy, we first show a key geometric property of the second and third 
shortest paths between the anchors in the visibility graph, namely, that they are 
non-self-intersecting, and therefore lead to well-defined closed regions.

\begin{figure}[htb]
\begin{center}
\includegraphics[scale=.6]{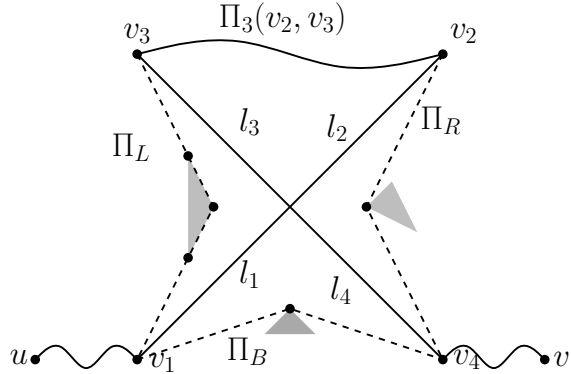}
\end{center}
\caption[toc entry]{Non-self-crossing of shortest paths $\path{1}, \path{2}, \path{3}$.} 
\label{fig:crossings}
\end{figure}

\begin{lemma} \label{lem:crossings}
Let $\path{1}$ be the shortest path between two anchor points $u$ and $v$ on $P$'s 
boundary, and focus on the sub-polygon $P_e$ that lies on one side of $\path{1}$.
Let $\path{2}$ and $\path{3}$, respectively, be the second and the third simple
(loop-free) shortest paths in the visibility graph $G(P_e)$ between $u$ and $v$.
Then, $\path{2}$ and $\path{3}$ are non-self-crossing.
\end{lemma}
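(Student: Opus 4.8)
The plan is to argue by contradiction using an uncrossing (loop-removal) exchange, handling $\path{2}$ first and then $\path{3}$ by the same machinery. A geometric self-crossing means that two \emph{non-adjacent} edges of the path, say $(a,b)$ and $(c,d)$, meet at an interior point $q$. Because the path is loop-free it visits every vertex once, so $q$ must lie strictly inside both segments (a \emph{proper} crossing) and the four vertices are distinct and occur along $\path{2}$ in the order $a \prec b \prec c \prec d$ — if two of them coincided the path would repeat a vertex. The subpath $\subpath{2}{b}{c}$ together with the sub-segments $qb$ and $cq$ then bounds a closed loop $L$ through $q$, and the whole strategy is to excise $L$.

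First I would perform the length-reducing exchange. Since $qa$ is part of edge $(a,b)$ and $qd$ is part of edge $(c,d)$, both lie inside $P_e$, so the bent route $a \to q \to d$ is a legal route of length $|qa|+|qd|$; hence the geodesic distance satisfies $\dist{a}{d} \le |qa|+|qd| < |ab| + \sublen{2}{b}{c} + |cd| = \subpath{2}{a}{d}$, the strict inequality coming from $|qb|,|cq|>0$ at a proper crossing. Replacing $\subpath{2}{a}{d}$ by the geodesic from $a$ to $d$ — which, being a shortest path, is itself a path in $G(P_e)$ — yields a walk from $u$ to $v$ strictly shorter than $\path{2}$, and deleting any repeated vertices turns it into a \emph{simple} path $\Pi'$ that is strictly shorter than $\path{2}$. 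I deliberately shortcut through the crossing point rather than applying a $2$-opt swap to the chords $(a,c),(b,d)$: in a polygon with holes those straight chords need not be visibility edges, and their geodesic replacements can be \emph{longer} than the chords themselves, so a $2$-opt need not shorten the path, whereas cutting the loop at $q$ always does.

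The final step extracts the contradiction, and here the real difficulty lies. A strictly shorter simple path $\Pi'$ refutes the definition of $\path{2}$ (the shortest simple path other than $\path{1}$) only if $\Pi'$ is still combinatorially distinct from $\path{1}$; the crux is to rule out $\Pi'$ collapsing onto $\path{1}$. I would organize this by homotopy type in the holed polygon $P_e$. If the removed loop $L$ encloses a hole, then $\Pi'$ lands in a homotopy class different from that of $\path{2}$, and a short check shows that class is not the one containing $\path{1}$, so the taut (shortest-in-class) representative of $\Pi'$, itself a simple path in $G(P_e)$, is a path distinct from $\path{1}$ and strictly shorter than $\path{2}$. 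If $L$ encloses no hole, then $\Pi'$ is homotopic to $\path{2}$, and the only escape from an immediate contradiction is for $\path{2}$ to be homotopic to $\path{1}$; excluding that possibility — showing the second shortest simple path cannot be a longer, self-crossing detour lying in $\path{1}$'s own homotopy class — is the subtle heart of the lemma, and I expect it, rather than the geometric uncrossing, to be the main obstacle.

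For $\path{3}$ the same argument applies, with the one extra complication that the shortened simple path must now be certified distinct from \emph{both} $\path{1}$ and $\path{2}$. When the naive shortcut happens to reproduce one of them, I would reuse the homotopy classification to steer the excised path into a third class, exploiting the fact that a self-crossing path always supplies a loop $L$ to remove. The bookkeeping that prevents the shortened path from coinciding with an already-enumerated path is again where the work concentrates; once that is secured, the strict length drop from removing $L$ closes the contradiction and proves that $\path{2}$ and $\path{3}$ are non-self-crossing.
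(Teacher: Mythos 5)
Your opening moves match the paper's: a proper crossing of two non-adjacent edges cannot occur at a vertex (by loop-freeness), and the Euclidean triangle inequality at the crossing point yields a strictly shorter connection, hence a strictly shorter simple path in $G(P_e)$. The genuine gap is exactly the one you flag yourself: certifying that the shortened path differs from the previously enumerated paths. Your proposed repair via homotopy cannot work, because the relevant notion of ``distinct'' here is combinatorial (the paths must differ in at least one visibility edge), and homotopy does not track it: the second shortest simple path may perfectly well lie in the same homotopy class as $\Pi_1$ (e.g.\ a detour over different vertices that encloses no hole), so the fact you hope to prove in your ``$L$ encloses no hole'' case --- that $\Pi_2$ cannot be homotopic to $\Pi_1$ --- is false in general. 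The ``short check'' you invoke in the other case (that the uncrossed path's class differs from $\Pi_1$'s) is also unavailable: removing a loop that winds around a hole can land you exactly in $\Pi_1$'s class, or even reproduce $\Pi_1$ itself. With only one shortened path $\Pi'$ in hand, the possibility $\Pi' = \Pi_1$ (or $\Pi' \in \{\Pi_1,\Pi_2\}$ in the $\Pi_3$ case) simply cannot be excluded, so the contradiction never materializes.

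The paper sidesteps this bookkeeping entirely with a counting argument, which is the idea your proposal is missing. From the single crossing of edges $(v_1,v_2)$ and $(v_3,v_4)$ it builds \emph{three} pairwise distinct replacement paths, each strictly shorter than the crossing path: with $\Pi_L,\Pi_R,\Pi_B$ denoting shortest paths realizing $d(v_1,v_3)$, $d(v_2,v_4)$, $d(v_1,v_4)$ (each strictly shorter than the corresponding sum of segment pieces, by the triangle inequality at the crossing point), the candidates are $\Pi_L\cdot\Pi_3(v_2,v_3)\cdot\Pi_R$ (traversing the middle subpath in reverse), $\Pi_B$ alone, and the shorter of $\Pi_L\cdot(v_3,v_4)$ and $(v_1,v_2)\cdot\Pi_R$. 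At most two of these can coincide with $\Pi_1$ and $\Pi_2$, so at least one is a strictly shorter simple $u$--$v$ path in $G(P_e)$ distinct from both, contradicting the choice of $\Pi_3$ (and a fortiori of $\Pi_2$, for which only two candidates are needed). Your uncrossing could be salvaged in the same spirit --- manufacture several combinatorially distinct shortcuts from the one crossing and apply pigeonhole --- but as written the proposal leaves its central difficulty open, and the homotopy route it suggests for closing it is a dead end.
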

\begin{proof}
Without loss of generality, suppose the path $\path{3}$ violates the lemma, and that 
two of its edges $(v_1, v_2)$ and $(v_3, v_4)$ intersect.  See Figure~\ref{fig:crossings}. 
We first note that the intersection point cannot be a vertex of the visibility graph
because otherwise the path has a cycle, and we assumed that $\path{3}$ is loop-free.
As shown in the figure, we break the segment $(v_1, v_2)$ into $l_1$ and $l_2$,
and $(v_3,v_4)$ into $l_3$ and $l_4$. By the triangle inequality of the Euclidean
metric, it is easy to see that $\dist{v_1}{v_3} < l_1 + l_3$, and
$\dist{v_2}{v_4} < l_2 + l_4$. Similarly, $\dist{v_1}{v_4} < l_1 + l_4$. 
Let $\path{L}$, $\path{R}$, $\path{B}$, respectively, denote the
shortest paths (in the graph $G(P_e))$ realizing these distances.
Now consider the following three paths between $v_1$ and $v_4$, each contained
in $G(P_e))$ and \emph{non-self-intersecting}:
$\path{L} \cdot \subpath{3}{v_2}{v_3} \cdot \path{R}$,
$\path{B}$, and
the shorter of
$\path{L} \cdot (v_3, v_4)$ and $(v_1, v_2) \cdot \path{R}$.
They are all shorter than $\path{3}$ and at least one of them
must be distinct from both $\path{1}$ and $\path{2}$, thus
contradicting the choice of $\path{3}$.
This completes the proof.   
\end{proof}


\subsection{Shrinking, Guarding and Evicting}

In a general step of the algorithm, assume that the evader lies in a region 
$P_e$ of the polygon bounded by two minimal paths $\path{1}$ and $\path{2}$ 
between two anchor vertices $u$ and $v$.  (Strictly speaking, the region
$P_e$ is initially bounded by $\path{1}$, which is minimal, and portion of $P$'s
boundary, which is not technically a minimal path. However, the evader cannot 
cross the polygon boundary, and so we treat this as a special case of the 
minimal path to avoid duplicating our proof argument.)
We also assume that $\path{1}$ and $\path{2}$ only share vertices $u$ and $v$;
if they share a common prefix or suffix subpath, we can delete those
and advance the anchor nodes to the last common prefix vertex and
the first common suffix vertex. This ensures that the region $P_e$
is \emph{non-degenerate}.
Furthermore, we assume that the region $P_e$ contains at least one 
hole---otherwise, the evader is trapped in a simply-connected region,
where a single (the third) pursuer can capture it.

The key idea of our proof is to show that, in the visibility graph $G(P_e)$,
if we compute a \emph{shortest path} from $u$ to $v$ that is distinct from 
both $\path{1}$ and $\path{2}$, then it divides $P_e$ into \emph{only} two 
regions, and that the evader is trapped in one of those regions.
We will call this new path the \emph{third} shortest path $\path{3}$.
Specifically, $\path{3}$ is the simple (loop-free) shortest path
from $u$ to $v$ in $G(P_e)$ distinct from $\path{1}$ and $\path{2}$.
(One can compute such a path using any of the algorithms for computing 
$k$ loop-free shortest paths in a weighted undirected graph~\cite{suri-focs,widmayer,Yen71}.)

\begin{lemma} \label{lem:twoRegions}
The shortest path $\path{3}$ between the anchor nodes $u$ and $v$
divides the current evader region $P_e$ into two connected regions.
\end{lemma}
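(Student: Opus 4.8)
The plan is to argue that $\path{3}$, being a \emph{simple} (non-self-crossing, by Lemma~\ref{lem:crossings}) path between two boundary anchors $u$ and $v$, must partition $P_e$ into exactly two connected regions --- no more, no fewer. The ``at least two'' direction is essentially by construction: since $u,v$ lie on the boundary of $P_e$ and $\path{3}$ is a curve joining them through the interior, it separates the boundary cycle of $P_e$ into two arcs, and points just to either side of $\path{3}$ near an interior edge cannot be connected without crossing $\path{3}$. So the real content is the \emph{upper} bound: that $\path{3}$ creates \emph{only} two regions and not three or more.

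First I would set up the topological framework carefully. The region $P_e$ is bounded by the two minimal paths $\path{1}$ and $\path{2}$, which (by the non-degeneracy assumption) share only the anchors $u$ and $v$; hence the boundary of $P_e$ is a single closed Jordan curve (possibly with holes in the interior, but the \emph{outer} boundary is a simple closed curve). The key simplifying observation I would invoke is that $\path{3}$ lies in the visibility graph $G(P_e)$, so every edge of $\path{3}$ is a chord lying entirely inside $P_e$; combined with Lemma~\ref{lem:crossings}, $\path{3}$ is a simple polygonal arc whose endpoints $u,v$ are on the outer boundary and whose relative interior lies in the open interior of $P_e$. A simple arc with both endpoints on the outer boundary and interior disjoint from that boundary is exactly the hypothesis of the Jordan arc / Jordan curve separation argument.

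The crux of the argument is then to show $\path{3}$ does not pass through any hole and touches the boundary only at $u$ and $v$, so that $\path{3}$ together with one of the two boundary arcs between $u$ and $v$ forms a simple closed curve. By the Jordan curve theorem this curve has exactly one bounded complementary region inside $P_e$; doing this for each of the two boundary arcs accounts for all of $P_e$, giving precisely two regions. The main obstacle I anticipate is the presence of \emph{holes}: a priori, $\path{3}$ might wind around a hole or use vertices of a hole as intermediate nodes, which could a priori create a third pocket (e.g. if $\path{3}$ were to ``pinch'' against a hole boundary and split off an extra component). I would rule this out by using that $\path{3}$ is \emph{loop-free} and \emph{shortest} among paths distinct from $\path{1},\path{2}$: a shortest simple path cannot touch a hole boundary and then leave it in a way that encloses a separate region, since that enclosed portion could be absorbed to yield a shorter or equally short but combinatorially identical path, contradicting either minimality or the distinctness/loop-freeness of $\path{3}$.

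Concretely, the steps in order would be: (1) invoke Lemma~\ref{lem:crossings} to assert $\path{3}$ is simple; (2) verify $\path{3}$'s relative interior avoids the outer boundary of $P_e$ (otherwise an intermediate contact vertex on the boundary would let us shortcut or would contradict that $u,v$ are the only shared vertices); (3) argue $\path{3}$ does not create a closed loop around a hole, using loop-freeness and the shortest-path property to exclude extra enclosed components; (4) apply the Jordan curve theorem to the two simple closed curves formed by $\path{3}$ and each boundary arc, concluding exactly two regions; and (5) note that since $e \notin \path{3}$, the evader lies in the interior of exactly one of these two regions. I expect step~(3) to be the genuinely delicate part, because it is where the combinatorics of the visibility graph and the shortest-path optimality must combine to defeat the topological freedom that holes would otherwise allow.
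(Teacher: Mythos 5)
There is a genuine gap, and it lies in your step (2). You claim that the relative interior of $\path{3}$ avoids the boundary of $P_e$, touching it only at $u$ and $v$, and you justify this by saying an intermediate contact vertex ``would let us shortcut or would contradict that $u,v$ are the only shared vertices.'' Both justifications fail. The non-degeneracy assumption (sharing only $u$ and $v$) is a condition on $\path{1}$ versus $\path{2}$; it says nothing about $\path{3}$. And $\path{3}$ is only required to be \emph{combinatorially} distinct from $\path{1}$ and $\path{2}$ (differing in at least one edge), so it can---and typically does---share vertices and entire subpaths with them, with no possibility of shortcutting: indeed, the paper's Lemma~\ref{lem:twoEdges} shows that in the hole-free case $\path{3}$ has a common prefix and a common suffix with $\path{1}$ and differs in exactly two edges. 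Once this false premise is removed, your Jordan-arc framework collapses: an arc that runs along portions of the boundary, leaves it, and rejoins it (possibly hopping between $\path{1}$ and $\path{2}$) is precisely the configuration that can carve $P_e$ into three or more regions, and ruling that out is the entire content of the lemma.

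That hard case is what the paper's proof actually addresses: if $\path{3}$ meets only one of $\path{1},\path{2}$ in several disjoint subpaths, the minimality of that bounding path lets one contract all but one of the created regions and strictly shorten $\path{3}$, contradicting its choice; if $\path{3}$ hops between both bounding paths (points $x,y,z$ in the paper's figure), then $\sublen{2}{y}{v} \leq \dist{y}{z} + \sublen{1}{z}{v}$ (else $\path{2}$ was not chosen correctly), so $\path{3}$ can be rerouted along $\subpath{2}{y}{v}$ to eliminate the extra region, again a contradiction. Your proposal never engages with this rerouting argument, which is the crux. A secondary weakness is step (3): the claim that a pocket pinched off against a hole ``could be absorbed to yield a shorter or equally short but combinatorially identical path'' is not an argument---``equally short and combinatorially identical'' yields no contradiction---so even the part of your plan dealing with holes is left unsubstantiated.
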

\begin{proof}
If the path is disjoint from $\path{1}$ and $\path{2}$ except at endpoints, 
then $P_e$ is clearly subdivided into two regions. If $\path{3}$ shares
vertices only with $\path{1}$ or only with $\path{2}$, \emph{but in multiple disjoint
subpaths creating multiple regions}, then minimality of those paths means that we 
can contract all but one region and shorten $\path{3}$, contradicting the choice of
this third path. Therefore, let us suppose that $\path{3}$ shares vertices with 
both the paths, and so ``hops'' between $\path{1}$ and $\path{2}$, sharing 
common subpaths with them, and creates three or more regions.
In that case, $\path{3}$ must leave and rejoin $\path{1}$ and $\path{2}$ at 
least once, as shown by points $x,y,z$ in Figure~\ref{fig:twoRegions}. 
We observe that $\sublen{2}{y}{v}$ is no longer than 
$\dist{y}{z} + \sublen{1}{z}{v}$,
otherwise we can contradict the choice of $\path{2}$. 
Thus the third region can be removed by altering $\path{3}$ to use the subpath
$\subpath{2}{y}{v}$. 
(A symmetric case arises when the roles of $\path{1}$ and $\path{2}$ are swapped.)
Thus, we conclude that $\path{3}$ can create only two subregions.   
\end{proof}

\begin{figure}[htb]
\begin{center}
\subfigure[]{
\includegraphics[scale=.60]{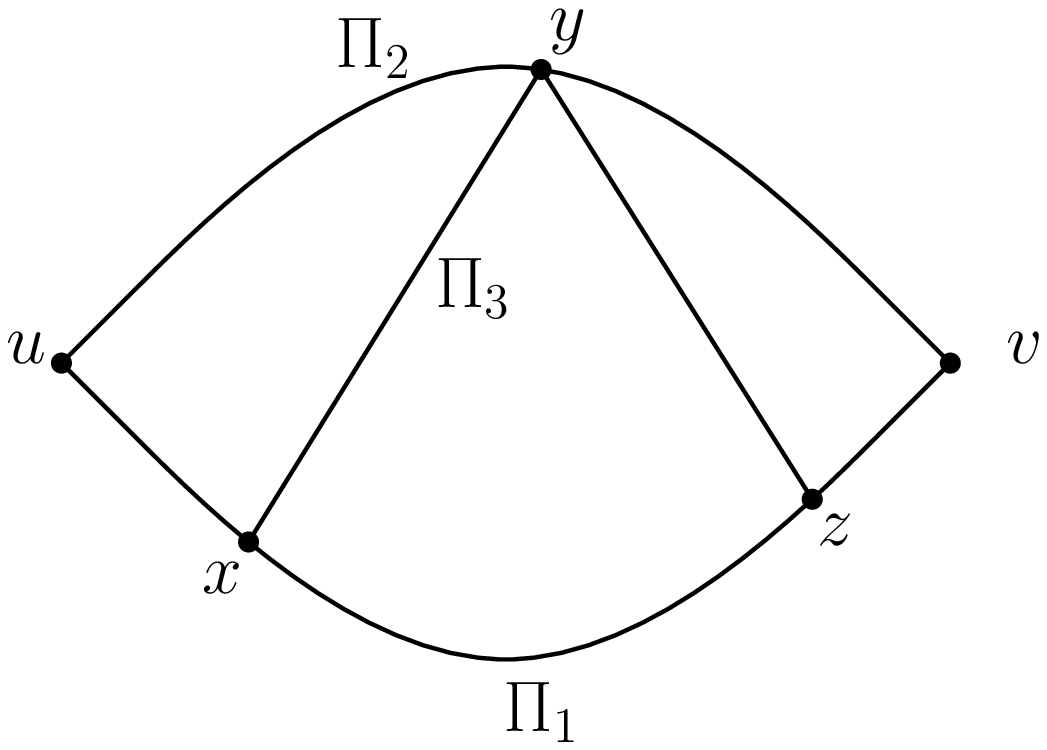}
\label{fig:twoRegions}
}
\subfigure[]{
\includegraphics[scale=.60]{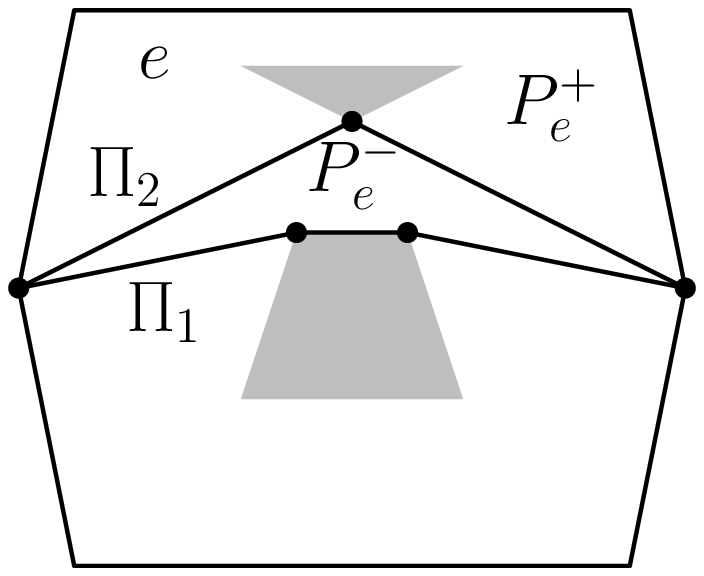}
\label{fig:polygonTwoRegion}
}
\end{center}
\caption[toc entry]{The left figure illustrates the proof of Lemma~\ref{lem:twoRegions};
the right figure illustrates the two subregions created by a path, $\path{2}$
in this case.}
\label{fig:shorterST}
\end{figure}

Because $P_e$ contains at least one hole, one of the regions created by 
the third shortest path $\path{3}$ also must contain a hole. The following 
lemma argues that $\path{3}$ is minimal with respect to that region.
(The other region may be hole-free, and we will argue that the evader
can be evicted from such a hole-free region or captured.)

\begin{figure}[htb]
\begin{center}
 \subfigure[]{
 \includegraphics[scale=.6]{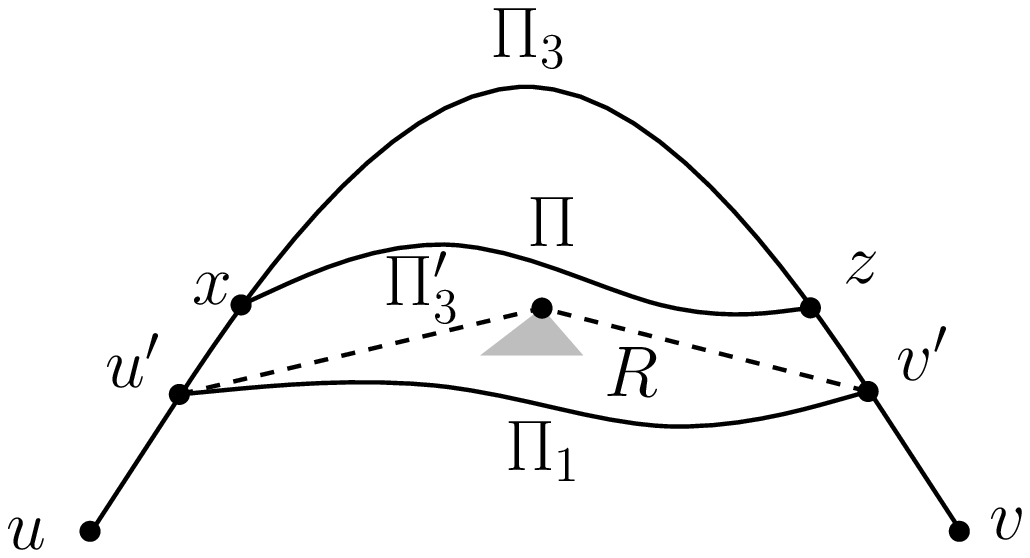}
 \label{fig:holeBelow}
 }
 \subfigure[]{
 \includegraphics[scale=.6]{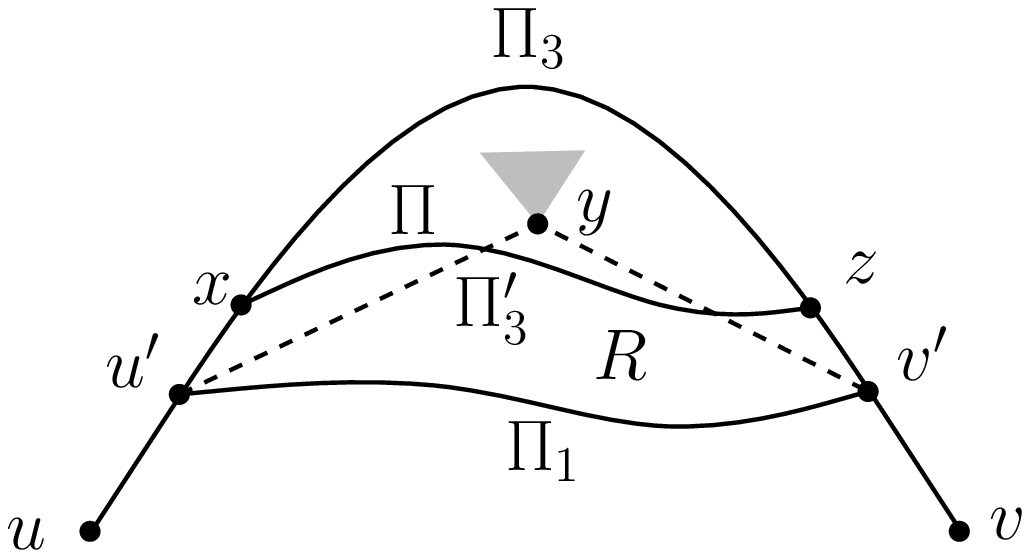}
 \label{fig:holeAbove}
 }
\end{center}
\caption[toc entry]{The left figure illustrates the proof of Lemma~\ref{lem:twoRegions};
the right figure illustrates the two subregions created by a path, $\path{2}$
in this case.}
\label{fig:holeAboveBelow}
\end{figure}

\begin{lemma} \label{lem:confinementPath}
Suppose $\path{3}$ divides the region $P_e$ into two subregions $P^+_e$ and $P^-_e$, 
and assume that $P^+_e$ contains at least one hole. Then, $\path{3}$ is a minimal path 
within the region $P^+_e$.
\end{lemma}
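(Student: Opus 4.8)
The plan is to argue by contradiction, leveraging the defining optimality of $\path{3}$ as the shortest loop-free $u$-$v$ path in $G(P_e)$ that is distinct from $\path{1}$ and $\path{2}$. Suppose $\path{3}$ is not minimal within $P^+_e$. Then there exist points $x,z \in \path{3}$ and a witness $y \in P^+_e \setminus \path{3}$ with $\dist{x}{y} + \dist{y}{z} < \sublen{3}{x}{z}$. The idea is to splice the detour through $y$ into $\path{3}$ and thereby exhibit a shorter admissible competitor path, contradicting the choice of $\path{3}$.

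First I would make the detour explicit. Let $\path{L}$ and $\path{R}$ be shortest paths in $G(P_e)$ realizing $\dist{x}{y}$ and $\dist{y}{z}$; since $y \in P^+_e \subseteq P_e$, these stay inside $P_e$ and correspond to genuine paths of $G(P_e)$. Form the $u$-$v$ walk $\path{3}' = \subpath{3}{u}{x} \cdot \path{L} \cdot \path{R} \cdot \subpath{3}{z}{v}$. The violating inequality makes its length strictly smaller than $\sublen{3}{u}{v}$, and deleting any loops yields a simple path $\hat{\Pi}$ of $G(P_e)$ no longer than $\path{3}'$, hence strictly shorter than $\path{3}$. By the optimality of $\path{3}$, this forces $\hat{\Pi}$ to coincide with $\path{1}$ or $\path{2}$, so it remains to rule out both.

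Ruling out $\path{2}$ is easy: the detour $\path{L}\cdot\path{R}$ passes through the witness $y$, which lies strictly on the $P^+_e$ side of $\path{3}$, whereas $\path{2}$ bounds the opposite region $P^-_e$ and never meets the interior of $P^+_e$; hence $\hat{\Pi} \neq \path{2}$. The case $\hat{\Pi} = \path{1}$ is the main obstacle, precisely because $\path{1}$ forms part of the boundary of $P^+_e$ and so cannot be excluded by a crude ``which side'' argument. Here I would exploit the standing hypothesis that $P^+_e$ contains a hole together with the minimality of $\path{1}$: the hole is trapped between $\path{1}$ and $\path{3}$, so any $x$-to-$z$ route strictly shorter than $\subpath{3}{x}{z}$ must thread the interior of $P^+_e$ on the $\path{3}$ side of the hole rather than trace the far boundary $\path{1}$, which keeps $\hat{\Pi}$ combinatorially distinct from $\path{1}$; and the minimality of $\path{1}$ is what forbids the degenerate alternative in which the detour hugs $\path{1}$. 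A secondary technical point I would nail down is that loop removal (and, if the naive splice happened to coincide with an existing path, a choice among several reroutings in the spirit of Lemma~\ref{lem:crossings}) preserves distinctness from both $\path{1}$ and $\path{2}$. Combining these, every case yields a path that is shorter than $\path{3}$ yet admissible, the desired contradiction, so $\path{3}$ is minimal in $P^+_e$.
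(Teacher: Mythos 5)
Your overall framework---splice the violating detour into $\Pi_3$, observe the result is shorter, conclude it must coincide with $\Pi_1$ or $\Pi_2$, then rule both out---matches the paper's setup, and you correctly isolate the hard case, namely that the competitor may coincide with $\Pi_1$. But your resolution of that case is an assertion rather than an argument, and the assertion is false as stated: nothing forces a shorter $x$-to-$z$ route to ``thread the interior of $P^+_e$ on the $\Pi_3$ side of the hole.'' If the hole sits close to $\Pi_3$, the cheapest detour hugs the far boundary and can be \emph{exactly} a subpath of $\Pi_1$; indeed the paper takes precisely this as its main case (``assume without loss of generality that the shortest path from $u'$ to $v'$ is a subpath of $\Pi_1$''). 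The content of the lemma is not that coincidence with $\Pi_1$ cannot happen, but that when it does happen one can still \emph{construct} a competitor that is shorter than $\Pi_3$ yet combinatorially distinct from both $\Pi_1$ and $\Pi_2$---and this construction is exactly where the hole hypothesis enters. The paper considers the region $R$ bounded by the detour $\Pi$, the subpath of $\Pi_1$, and the two connecting segments: if a hole lies inside $R$, the detour is tightened around that hole (Figure~\ref{fig:holeBelow}), producing a shorter path that is distinct because it wraps hole vertices not on $\Pi_1$ or $\Pi_2$; if no hole lies in $R$, then the hole of $P^+_e$ lies between the detour and $\Pi_3$, and the paper routes a shorter path through the hole vertex $y$ nearest to $\Pi$ (Figure~\ref{fig:holeAbove}). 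Your proposal invokes the hole only rhetorically (``keeps $\hat{\Pi}$ combinatorially distinct''), so in the one case that matters the contradiction is never actually delivered; also, ``the minimality of $\Pi_1$ forbids the detour from hugging $\Pi_1$'' is not a statement minimality supports---minimality says interior detours cannot shortcut $\Pi_1$, not that geodesics avoid it.

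A secondary gap: the spliced walk you build passes through $x$, $y$, $z$, which are arbitrary points of the polygon, not vertices, so $\hat{\Pi}$ is in general not a path of the visibility graph $G(P_e)$ and therefore cannot directly contradict the definition of $\Pi_3$, which is the shortest \emph{loop-free path in $G(P_e)$} distinct from $\Pi_1$ and $\Pi_2$. The paper handles this by passing to the vertex $u'$ immediately preceding $x$ and the vertex $v'$ immediately following $z$ on $\Pi_3$, and comparing genuine visibility-graph paths between $u'$ and $v'$. Relatedly, your deferred claim that loop removal preserves distinctness from $\Pi_1$ and $\Pi_2$ is nontrivial for the same reason (loop removal can delete exactly the portion containing the witness). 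These technicalities are fixable, but the missing hole construction above is a genuine gap: without it the proof does not go through.
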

\begin{proof}
Assume, for the sake of contradiction, that the minimality of $\path{3}$ is violated for
two points $x,z \in \path{3}$. Let $u'$ be the vertex immediately preceding the point $x$,
possibly $x = u'$, and $v'$ is the vertex immediately following $z$, possibly $z = v'$, on
$\path{3}$. Consider the shortest path in $G(P_e)$ from $u'$ to $v'$. If this path is
\emph{not} a subpath of either $\path{1}$ or $\path{2}$, then we can immediately improve the
length of $\path{3}$ by using this subpath, thereby contradicting the choice of $\path{3}$.
Therefore, assume without loss of generality that the shortest path from $u'$ to $v'$
is a subpath of $\path{1}$. Further, let $\path{}$ denote the shortest path from
point $x$ to point $z$ in $P^+_e$, and consider the region $R$ bounded by 
$\subpath{1}{u'}{v'}$, $\path{}$ and the segments $(z,v')$ and $(x,u')$. If there are any holes in $R$  
then there is a distinct path $\path{3}'$ shorter than $\path{3}$ obtained by 
tightening $\path{}$ around those holes as shown in Figure~\ref{fig:holeBelow}. Thus 
the hole in $P^+_e$ must be outside $R$, however pick the closest vertex on a 
hole in $P^+_e$ to $\path{}$, call it $y$. Then a path $\path{3}'$ shorter than 
$\path{3}$ can be obtained using $y$ as shown in Figure~\ref{fig:holeAbove}.
Thus in all cases, if $P^+_e$ contains a hole, $\path{3}$ can be shortened, which
contradicts its optimality. Thus $\path{3}$'s minimality cannot be violated,
and the proof is complete.
\end{proof}

When one of the regions created by $\path{3}$ is hole-free, then
$\path{3}$ has a very simple structure, consisting of only two distinct edges
as seen in Figure~\ref{fig:polygonTwoRegion}, allowing it to be
cleared using the search strategy of a simply-connected polygon.

\begin{lemma} \label{lem:twoEdges}
Suppose $\path{3}$ divides the region $P_e$ into two subregions, one
of which is hole-free. Then, $\path{3}$ consists of precisely two edges.
\end{lemma}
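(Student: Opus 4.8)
The plan is to treat $\path{3}$ as a taut geodesic and exploit the absence of holes in the region it bounds. Write $R$ for the hole-free subregion and $R'$ for the subregion containing a hole, so that $\path{3}$ is the common boundary separating them, and the remainder of $\partial R$ is a chain $C$ of $\partial P_e$. By Lemma~\ref{lem:twoRegions} and the reduction preceding it, $C$ is a subpath of $\path{1}$ or of $\path{2}$; I assume without loss of generality that $C \subseteq \path{1}$, and that $\path{3}$ meets $\partial P_e$ only at the anchors $u$ and $v$ (any shared subpath can be removed as in Lemma~\ref{lem:twoRegions}). Because $\path{3}$ is a shortest path in $G(P_e)$ subject to being distinct from $\path{1}$ and $\path{2}$, it is a geodesic in its homotopy class, and in particular it bends only at reflex vertices of $P_e$, wrapping each such vertex with the obstacle on the concave side of the turn.

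First I would pin down the direction of every bend of $\path{3}$. Since $R$ is hole-free and $\path{3}$ touches the chain $C$ only at $u$ and $v$, no obstacle of $P_e$ lies on the $R$-side of $\path{3}$ in its interior; hence the concave side of each bend must face $R'$, and each bend wraps either a hole of $R'$ or a boundary vertex on the $R'$ side. Equivalently, $\path{3}$ is convex toward $R$. I would then record the complementary fact that the geodesic from $u$ to $v$ inside the closed region $\bar R$ coincides with $C$: if this geodesic were instead a path distinct from $\path{1}$ and $\path{2}$, it would be a shorter such path and contradict the minimality built into the definition of $\path{3}$; and it cannot be shorter than $C$, because $C$ is already a subpath of the shortest path $\path{1}$ and is therefore taut within $\bar R$.

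The heart of the argument is then to rule out two or more bends by a shortcut/exchange step. Suppose $\path{3}$ had at least two interior vertices $w_1 \prec w_2$, each wrapping an obstacle in $R'$. Because $\path{3}$ is convex toward the hole-free side $R$ and $u,v$ lie on $\partial P_e$, I would argue that at least one of $u$ or $v$ has an unobstructed line of sight, within $P_e$, to a vertex strictly interior to $\path{3}$, producing a visibility edge of $G(P_e)$ that lets me reroute $\path{3}$ into a strictly shorter path (for instance replacing $\subpath{1}{u}{v}$-side detours by a single chord). The delicate part is to verify that the rerouted path is still simple, still separates the hole of $R'$ from $C$, and hence is still combinatorially distinct from both $\path{1}$ and $\path{2}$. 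Granting this, the rerouted path contradicts the choice of $\path{3}$ as the shortest distinct path, so $\path{3}$ can have at most one interior vertex; since it cannot be a single edge (that would force the mutually visible anchors to satisfy $\path{3}=\path{1}$ by uniqueness of the shortest geodesic), it has exactly one interior vertex and therefore precisely two edges.

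I expect the main obstacle to be exactly this last exchange step: establishing that a second bend always exposes a length-decreasing shortcut whose rerouted path remains loop-free and distinct from $\path{1}$ and $\path{2}$. The turn-direction analysis and the geodesic identification are routine consequences of tautness and of $\path{3}$ being the shortest distinct path, but controlling distinctness and simplicity after the reroute---so that the contradiction with minimality is genuine rather than collapsing $\path{3}$ onto $\path{1}$ or $\path{2}$---is where the care is needed.
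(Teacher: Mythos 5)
Your proposal has two genuine gaps, and the first undermines the second. The foundational problem is that $\path{3}$ is \emph{not} a geodesic in its homotopy class, and it fails to be one precisely in the situation this lemma addresses. Since the subregion $R$ on one side of $\path{3}$ is hole-free, $\path{3}$ can be continuously tightened across $R$ onto the complementary chain $C$ of $\partial P_e$ bounding $R$; that chain (a portion of $\path{1}$, say) is shorter, so the geodesic representative of $\path{3}$'s homotopy class is $C$, not $\path{3}$. The path $\path{3}$ is shortest only \emph{subject to being combinatorially distinct} from $\path{1}$ and $\path{2}$: its bends survive not because an obstacle blocks the shortcut, but because taking the shortcut would collapse the path onto $\path{1}$ in the visibility graph. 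Indeed, in the configuration the lemma actually establishes---a two-edge ``tent'' over $\path{1}$---the apex bend has free space on the inside of its turn. Worse, if tautness \emph{did} hold, every bend would have to wrap a hole, and all holes lie in $R'$, so the inside of every turn would face $R'$; that is the opposite of your conclusion that $\path{3}$ is convex toward $R$. So the turn-direction analysis, which you describe as routine, does not stand as justified.

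The second gap is the exchange step, which you explicitly defer (``Granting this\ldots''), and it is exactly where the content of the lemma lies. Two things are unproven there: (i) that some anchor sees a vertex strictly interior to $\path{3}$---this is doubtful in general, since the sight line from $u$ can cross $\path{3}$ into $R'$ and be blocked by holes of $R'$; and (ii) that the rerouted path stays simple and distinct from $\path{1}$ and $\path{2}$. The paper's proof sidesteps both difficulties by arguing locally rather than from an anchor: it first reuses the argument of Lemma~\ref{lem:confinementPath} to show that $\path{3}$ consists of a common prefix with $\path{1}$, a deviation $\subpath{3}{u'}{v'}$, and a common suffix; then, assuming the deviation has more than two edges, it takes its first three vertices $u', x, z$ and replaces $u' \to x \to z$ by the shortest $u'$-to-$z$ path. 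Because $u',x,z$ are not collinear and the pocket between the deviation and $\path{1}$ is hole-free, this shortcut is strictly shorter and nothing can obstruct it, while the rerouted path remains distinct from $\path{1}$ and $\path{2}$---contradicting the choice of $\path{3}$. A repair of your plan would essentially reproduce this: establish the prefix/deviation/suffix structure first, then shortcut at the first bend of the deviation inside the hole-free pocket, rather than attempting a global sight line from an anchor.
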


\begin{proof}
Arguing as in the preceding lemma, it can be shown that if $P^+_e$ is
hole-free, then the shortest path $\path{3}$ has a common prefix and a
common suffix with $\path{1}$, and only differs in a subpath of the form
$\subpath{3}{u'}{v'}$. Suppose for the sake of contradiction that $\subpath{3}{u'}{v'}$
consists of more than two edges. Consider the first three points
of this subpath, $u'$, $x$ and $z$. Notice that because they are not
co-linear and by assumption there are no holes contained between $\subpath{3}{u'}{v'}$
and $\path{1}$ that the shortest $u',z$ path would shortcut $x$, contradicting
the choice of $\path{3}$. Thus $z=v'$ and $\path{3}$ consists of only two edges.  
This completes the proof.
\end{proof}

Now, if both regions created by $\path{3}$ have holes, then the minimality
of $\path{3}$ allows a third pursuer to guard this path, and the pursuit 
continues in this smaller region. If one of the regions created is hole-free, 
then we no longer can assume minimality with respect to that region, so a 
different strategy is required. The following lemmas show how to either 
capture the evader in such a region, or to force the evader out of 
(evict) this region, while guarding $\path{3}$ so the evader \emph{cannot 
reenter} this region.

Capturing an evader in a hole-free region can be accomplished by advancing 
along the shortest path towards the current position of the evader.
In particular, we can fix a origin $O$ in the region (say, some vertex
in $P$), and then letting the pursuer move along the
shortest path between $O$ and the current evader position. It can be
shown that the pursuer makes sufficient progress towards the evader,
as articulated in the following lemma (which paraphrases a
technical result of~\cite{isler}).

\begin{lemma}
\label{lem:isler}
After each move, (1) the pursuer $p$ remains on the shortest path between $O$ and $e$,
and (2) its new position $p'$ satisfies $\dist{p'}{O}^2 \geq \dist{p}{O}^2 + \frac{1}{n}$.
\end{lemma}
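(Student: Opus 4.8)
The plan is to track the potential $\Phi = \dist{p}{O}^2$ and show it increases by a fixed additive amount in every round. I fix the pursuer's strategy as the natural \emph{geodesic chasing} rule: at all times the pursuer keeps itself on the shortest path $\gamma(O,e)$ from the origin to the current evader position, and on its turn it advances as far from $O$ as a unit step permits while staying on this path. Under this rule, part (1) of the lemma is the assertion that the invariant can be maintained, and part (2) quantifies the resulting progress.

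First I would establish feasibility, i.e.\ part (1). Suppose $p$ lies on $\gamma(O,e)$ and the evader moves to $e'$ with $\dist{e}{e'}\le 1$. Since the region in question is hole-free it is simply connected, so shortest paths from $O$ are unique and, by the standard convexity of the geodesic metric in a simply-connected polygon, depend in a $1$-Lipschitz manner on their far endpoint: the point at geodesic arclength $s$ on $\gamma(O,e')$ lies within distance $\dist{e}{e'}\le 1$ of the point at arclength $s$ on $\gamma(O,e)$. Taking $s=\dist{O}{p}$ (or $s=\dist{O}{e'}$ in the degenerate case $\dist{O}{p}>\dist{O}{e'}$, in which $p$ can step directly onto $e'$ and capture it), the pursuer can slide onto $\gamma(O,e')$ within its unit budget, so after the move $p'$ again lies on the shortest path to the \emph{current} evader position.

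For part (2) I would exploit that subpaths of shortest paths are themselves shortest: if the pursuer ends the round at the point $p'$ obtained by advancing along $\gamma(O,e')$ by arclength $\delta$ beyond the point at distance $\dist{O}{p}$, then $\dist{O}{p'}=\dist{O}{p}+\delta$, and therefore $\dist{p'}{O}^2-\dist{p}{O}^2 = 2\,\dist{O}{p}\,\delta+\delta^2 \ge \delta^2$. Everything thus reduces to showing that the realizable advance $\delta$, after paying for the lateral relocation onto $\gamma(O,e')$, is bounded below, namely $\delta^2 \ge \frac{1}{n}$: the evader must not be able to force the per-round gain to be arbitrarily small.

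I expect this uniform lower bound to be the main obstacle, and it is precisely where the piecewise-linear structure of the polygon enters. I would localize the estimate at the last vertex $g$ at which $\gamma(O,e')$ diverges from $\gamma(O,e)$: beyond $g$ each geodesic is a chain of straight links turning only at reflex vertices, and since the polygon has only $n$ vertices each geodesic has at most $n$ links. A counting/averaging argument over these at most $n$ links---the technical estimate established in~\cite{isler}---shows that at least one link must absorb an $\Omega(1/n)$ share of the available progress, yielding $\delta^2 \ge \frac{1}{n}$ and hence $\dist{p'}{O}^2 \ge \dist{p}{O}^2 + \frac{1}{n}$. Rather than reprove this calculation I would cite and adapt it. Finally I would record the consequence used by the algorithm: since $\Phi \le \diam(P)^2$ is bounded yet grows by at least $\frac{1}{n}$ per round, the pursuer's distance from $O$ must catch up to the evader's within $O(n\,\diam(P)^2)$ moves, at which point, both lying on $\gamma(O,e)$, the pursuer is collocated with $e$ and the evader is captured.
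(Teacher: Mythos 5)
Two remarks before the substance. First, the paper does not actually prove this lemma at all---it is imported wholesale, with the text stating that it ``paraphrases a technical result of~\cite{isler}''---so your instinct to cite \cite{isler} for the quantitative core rather than reprove it is consistent with what the authors themselves do. Second, your part (1), maintaining the on-geodesic invariant via uniqueness and convexity of geodesics in a simply-connected (hole-free) region, is essentially sound.

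The genuine gap is in your reduction of part (2). From the identity $\dist{p'}{O}^2 - \dist{p}{O}^2 = 2\,\dist{p}{O}\,\delta + \delta^2$ you discard the cross term and declare that everything reduces to $\delta^2 \geq \frac{1}{n}$. That statement is false, and no citation can rescue it, because the forward advance $\delta$ admits no lower bound depending only on $n$. Already in the unobstructed plane, put $O$ at the origin, $p$ at distance $d$ from $O$ on the segment from $O$ to $e$, with $e$ just over unit distance beyond $p$; if $e$ steps so as to maximize the angle between the old and new geodesics (essentially perpendicular to the segment), then $p$ must spend almost its entire unit budget on the lateral relocation, and a short computation gives $\delta = \Theta(1/\sqrt{d})$, hence $\delta^2 = \Theta(1/d)$. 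Since $d$ can be as large as $\diam(P)$, which is unrelated to and typically far larger than $n$, the inequality $\delta^2 \geq \frac{1}{n}$ fails; in that scenario the per-move gain is carried entirely by the dropped term $2\,\dist{p}{O}\,\delta = \Theta(\sqrt{d})$. Indeed, your claim proves too much: if $\delta \geq 1/\sqrt{n}$ held on every move, the \emph{unsquared} distance $\dist{p}{O}$ would grow by $1/\sqrt{n}$ per move, giving capture in $O(\sqrt{n}\,\diam(P))$ moves, strictly better than the $O(n \cdot \diam(P)^2)$ bound that \cite{isler} and this paper establish---a clear signal that the reduction is unsound. What must be cited from \cite{isler} is the bound on the full squared-distance gain, $2\,\dist{p}{O}\,\delta + \delta^2 \geq \frac{1}{n}$, whose proof necessarily couples the lateral cost to the radial advance; this is exactly why the potential is the \emph{squared} distance, and it is not a bound on $\delta$ alone.
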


Using this result, we can derive the following lemma about capturing the
evader in a hole-free region. 

\begin{figure}[htb]
\begin{center}
\includegraphics[scale=.4]{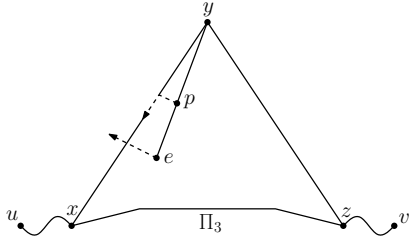}
\end{center}
\caption{An illustration of the pursuer's eviction strategy. Dashed lines denote moves where $e$ moved first.}
\label{fig:evict}
\end{figure}

\begin{lemma} \label{lem:removeEvader}
Suppose the evader lies in hole-free region of $k$ vertices that is bounded by $\path{3}$ and 
another minimal path. Then, in $O(k\cdot \diam(P)^2)$ moves, a single pursuer 
$p$ can either capture the evader or force it out of the region and
place itself on $e$'s projection on the path $\path{3}$.
\end{lemma}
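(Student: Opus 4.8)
The plan is to combine the progress guarantee of Lemma~\ref{lem:isler} with the path-guarding machinery of Lemmas~\ref{lem:singlepath}--\ref{lem:initialize}, handling the subtlety that the evader can leave the hole-free region during the chase. First I would set up the chase: fix an origin $O$ at one of the anchor vertices (say $u$) of the hole-free region $P^+_e$, and have the pursuer $p$ repeatedly move along the shortest path from $O$ to the current evader position $e$, as prescribed by Lemma~\ref{lem:isler}. By that lemma, as long as $e$ stays inside $P^+_e$, each move increases $\dist{p}{O}^2$ by at least $\frac{1}{n}$ (here $n$ is replaced by $k$, the vertex count of the region). Since $\dist{p}{O}$ can never exceed $\diam(P)$ — so $\dist{p}{O}^2 \leq \diam(P)^2$ — the quantity $\dist{p}{O}^2$ can increase at most $k \cdot \diam(P)^2$ times before $p$ and $e$ coincide. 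This yields the claimed $O(k \cdot \diam(P)^2)$ move bound for capture, \emph{provided} the evader never escapes.

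The main obstacle, and the heart of the lemma, is what happens when the evader tries to escape through $\path{3}$ rather than being caught. I would argue that the pursuer's progress along the shortest $O$-to-$e$ path keeps it positioned so that any attempt by $e$ to cross $\path{3}$ is intercepted. The clean way to see this is to observe that while $p$ chases $e$ inside $P^+_e$, the evader's distance to any point of $\path{3}$ is bounded below by $p$'s distance to that same point — in other words, $p$ dominates $e$ with respect to the boundary path, much as the projection argument in Lemma~\ref{lem:singlepath} guarantees. Concretely, if on some move $e$ steps across $\path{3}$ at a crossing point $q$, then because $p$ lay on the shortest path from $O$ to $e$ and has been closing distance, we have $\sublen{3}{p}{q} \le \dist{e}{q}$, and since $\dist{e}{q} + \dist{q}{e'} \le 1$, the new evader position $e'$ is within distance one of $p$; hence $p$ either captures $e$ or reaches the crossing projection in a single move, exactly as in Lemma~\ref{lem:singlepath}. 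Thus escape either triggers capture or deposits $p$ onto $e$'s projection on $\path{3}$, which is precisely the second alternative in the statement.

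Finally I would stitch the two behaviors into a single coherent strategy. The pursuer runs the shortest-path chase of Lemma~\ref{lem:isler}; on every pursuer turn it first checks whether the evader has crossed (or is about to cross) $\path{3}$, and if so switches to the interception/guarding move of Lemma~\ref{lem:singlepath}. Because $P^+_e$ is bounded by $\path{3}$ on one side and a minimal path on the other, and is hole-free, the chase is confined and Lemma~\ref{lem:isler}'s hypothesis (shortest paths behave monotonically in a simply connected region) applies throughout. If the evader never crosses, the monotone growth of $\dist{p}{O}^2$ forces capture within $O(k \cdot \diam(P)^2)$ moves; if it does cross, $p$ lands on the projection on $\path{3}$ and the region is cleared. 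Either way the conclusion of the lemma holds.

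I expect the delicate point to be justifying the domination inequality $\sublen{3}{p}{q} \le \dist{e}{q}$ at the moment of crossing: one must verify that staying on the shortest $O$-to-$e$ path (rather than explicitly sitting at the projection on $\path{3}$) is enough to keep $p$ at least as close to every boundary crossing point as $e$ is. This likely requires using the minimality of $\path{3}$ together with the fact that $O$ is an anchor on $\path{3}$, so that distances measured along $\path{3}$ and geodesic distances in $P^+_e$ are comparable — essentially re-deriving the projection-guarding property of Lemma~\ref{lem:pathEqualDistance} for the chasing pursuer.
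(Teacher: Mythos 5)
Your proposal has a genuine gap at its center. The claim that during the shortest-path chase ``$p$ dominates $e$ with respect to the boundary path'' --- i.e., that $p$ is always at least as close as $e$ to every point of $\path{3}$, so that any crossing is intercepted in a single move --- is false. Early in the chase $p$ is near the origin while $e$ may be sitting adjacent to $\path{3}$ far away; $e$ can then cross at a point $q$ with $\dist{e}{q}$ arbitrarily small while $p$ is $\Theta(\diam(P))$ away, so no one-move interception or capture is possible. (Also, the quantity $\sublen{3}{p}{q}$ you invoke is not even defined mid-chase, since $p$ is generally not on $\path{3}$.) Your suggested repair --- re-deriving the projection-guarding property using ``the minimality of $\path{3}$'' --- appeals to exactly the hypothesis that is unavailable: Lemma~\ref{lem:confinementPath} only guarantees that $\path{3}$ is minimal with respect to the subregion containing a hole, and the paper explicitly adopts this eviction strategy \emph{because} minimality with respect to the hole-free side can fail, so Lemmas~\ref{lem:pathEqualDistance}--\ref{lem:singlepath} cannot be applied there. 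Note also that if your domination claim were true, the evader could never leave the region at all, and the ``force it out'' alternative in the statement would be vacuous --- a sign the claim proves too much.

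The paper's proof rests on two ingredients your proposal omits. First, Lemma~\ref{lem:twoEdges}: when one side is hole-free, $\path{3}$ consists of exactly two segments $(x,y)$ and $(y,z)$, and the chase origin is placed at the \emph{apex} $y$, not at an anchor $u$. This two-edge structure is what makes interception work: when $e$ crosses, say, $(x,y)$, the pursuer --- which stays on the shortest path from $y$ to $e$ --- would cross the same straight segment at a point \emph{closer to $y$} than $e$'s crossing point; $p$ then stops on $\path{3}$ and walks toward $e$'s projection, which by Lemma~\ref{lem:initialize} can take $O(\diam(P)^2)$ moves, not one. Second, precisely because reaching the projection is slow, the evader may re-enter the region before $p$ gets there, so one needs a progress argument to rule out indefinite oscillation: re-entry must occur across the segment $(x,p)$, and for every turn the evader spent outside, $p$ advanced by one, so each resumption of the chase begins with strictly larger distance from $y$; this potential argument is what bounds the total over all escape/re-entry cycles by $O(k\cdot\diam(P)^2)$. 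Your strategy, with origin at $u$ and no interception or re-entry accounting, has no mechanism to prevent the evader from escaping far from the pursuer and crossing $\path{3}$ back and forth forever.
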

\begin{proof}

Assume, without loss of generality, that our hole-free region is bounded by a minimal
path $\path{1}$ and the path $\path{3}$, which by Lemma~\ref{lem:twoEdges} must consist of two
edges, say, $(x,y)$ and $(y,z$). The pursuer $p$'s strategy is to move to $y$, and
execute a simple-polygon search with $y$ as the origin with the following modification: 
if $p$'s move takes it outside the region, then it moves along $\path{3}$ toward $e_{\pi}$ 
until $e$ reenters, at which point its resumes the pursuit.

As the shortest path between any two vertices consists of at most two edges, this region can
have diameter no larger than $2\cdot \diam(P)$. Thus if $e$ never leaves the region, then by
the known result of Lemma~\ref{lem:isler}, a successful capture occurs in $O(k\cdot \diam(P)^2)$ 
moves. Therefore, assume that $e$ leaves the region at some point. Since $\path{1}$ is minimal,
the evader cannot leave the region through that path, and so assume without loss of generality
that the evader crosses the segment $(x,y)$ of $\path{3}$. Because $p$ always stays on the shortest
path between $e$ and $y$, in an unmodified pursuit $p$'s move would cross $(x,y)$ as well. In the modified
pursuit, $p$ stops at the point where it crosses $(x,y)$ and advances toward the projection of 
$e$. See Figure~\ref{fig:evict} for illustration.

We note that the projection of $e$ is within distance one of where it crossed $(x,y)$. 
As a result, because $p$ crossed $(x,y)$ at a point closer to $y$ than $e$, if $e_{\pi}$ 
lies on the subpath $\subpath{3}{p}{z}$, then $p$ can reach it in one move. Otherwise, $p$ 
need simply advance forward along $\path{3}$ toward $x$. If $e$ never re-enters the 
hole free region, then by Lemma~\ref{lem:initialize} $p$ will reach the projection 
within $O(\diam(P)^2)$ moves.

In case $e$ re-enters the hole-free region, we note that it must do so by crossing the 
segment $(x,p)$, and that for each turn $e$ was outside the hole-free region $p$ moved 
distance one along the shortest path from $y$ to $e$.  Thus on its next turn $p$ can resume 
its pursuit, while having sufficiently increased its distance from $y$ to guarantee a 
successful capture occurs in $O(k\cdot \diam(P)^2)$ moves should $e$ remain within the 
hole-free region. Thus $e$ may continually move back and forth between the hole-free 
region, but within $O(k\cdot \diam(P)^2)$ moves $e$ will either be captured, or 
the pursuer will successfully guard $\path{3}$ by reaching the projection.
This completes the proof.
\end{proof}

We can now summarize our main result.

\begin{theorem}	\label{thm:main}
Three pursuers are always sufficient to capture an evader in $O(n\cdot \diam(P) ^2 )$ moves
in a polygon with $n$ vertices and any number of holes.
\end{theorem}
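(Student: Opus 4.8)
The plan is to weave the preceding lemmas into a single confinement argument in which three pursuers cooperate to shrink the evader's region round by round, together with an amortized bound of $O(n\cdot\diam(P)^2)$ on the total number of moves. I would maintain throughout the invariant that at the start of each round the evader occupies a region $P_e$ bounded by two minimal paths between anchors $u,v$, each guarded by one pursuer; the remaining (free) pursuer is the one that will subdivide $P_e$. At the very start, $P_e$ is bounded by $\path{1}$, guarded by $p_1$ after the initialization of Lemma~\ref{lem:initialize}, and by a portion of the polygon boundary, which the evader cannot cross and so plays the role of the second minimal path for free.

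First I would record the cost of establishing a guard: by Lemma~\ref{lem:initialize} a pursuer starting at an anchor reaches $e$'s projection on a minimal path in $O(\diam(P)^2)$ moves, and by Lemma~\ref{lem:singlepath} it maintains that projection forever after, so each guarded path costs a one-time $O(\diam(P)^2)$ to set up. Crucially, while the free pursuer spends these moves initializing a new path, the other two pursuers keep maintaining their projections by Lemma~\ref{lem:singlepath}, so the evader remains trapped in $P_e$ throughout. A round then proceeds as follows. If $P_e$ is hole-free, the free pursuer captures $e$ outright using the simply-connected strategy of Lemmas~\ref{lem:isler} and~\ref{lem:removeEvader}. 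Otherwise I compute the third loop-free shortest path $\path{3}$ in $G(P_e)$; by Lemma~\ref{lem:crossings} it is non-self-crossing and by Lemma~\ref{lem:twoRegions} it splits $P_e$ into exactly two connected regions, so it bounds well-defined subregions.

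Next I would branch exactly as in the high-level strategy. If both subregions contain holes, Lemma~\ref{lem:confinementPath} makes $\path{3}$ minimal with respect to the evader's subregion, the free pursuer guards it, and the evader is confined to a subregion with strictly fewer holes than $P_e$. If one subregion is hole-free, then $\path{3}$ has exactly two edges by Lemma~\ref{lem:twoEdges}: when the evader sits in the hole-free pocket, Lemma~\ref{lem:removeEvader} either captures it or evicts it across $\path{3}$ and leaves the pursuer on $e$'s projection; when the evader is already on the hole side, Lemma~\ref{lem:confinementPath} again renders $\path{3}$ minimal and guardable. In every branch the evader ends up confined between two guarded minimal paths, with the unneeded subregion severed behind a guard, and the pursuer that had guarded whichever old path no longer bounds $P_e$ is freed to act as the free pursuer next round---so three pursuers always suffice.

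The hard part will be the move count, which I expect to handle by amortization rather than a uniform per-round bound. The key observation is that each round permanently severs a nonempty, previously-included piece of the region behind a newly guarded path, and these severed pieces are pairwise vertex-disjoint since a guarded minimal path is never recrossed; hence there are only $O(n)$ rounds in all. The $O(\diam(P)^2)$ cost of the single path-initialization per round therefore sums to $O(n\cdot\diam(P)^2)$. The eviction steps are the delicate contribution: evicting a $k$-vertex pocket costs $O(k\cdot\diam(P)^2)$ by Lemma~\ref{lem:removeEvader}, but because the evicted pockets are disjoint we have $\sum_i k_i\le n$, so their total cost telescopes to $O(n\cdot\diam(P)^2)$ as well. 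Adding the single final simply-connected capture, itself $O(n\cdot\diam(P)^2)$, yields the claimed bound and completes the argument.
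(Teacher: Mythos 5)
Your proposal is correct and follows essentially the same route as the paper's own proof: the same invariant (evader trapped between two guarded minimal paths, third pursuer free), the same round structure invoking Lemmas~\ref{lem:singlepath}--\ref{lem:removeEvader}, and the same accounting that at most $O(n)$ rounds occur because each newly guarded path permanently severs part of the region, each costing $O(\diam(P)^2)$ (or $O(k\cdot\diam(P)^2)$ amortized over disjoint hole-free pockets). Your write-up is in fact somewhat more explicit than the paper's terse argument---notably in spelling out the vertex-disjointness behind the $\sum_i k_i \le n$ amortization and the fact that the two guarding pursuers maintain their projections while the third initializes---but the underlying ideas are identical.
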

\begin{proof}

Whenever a new path is introduced, the size (number of vertices) of the region $P_e$ containing
$e$ shrinks by at least one. Thus, the number of different paths guarded during the
course of the pursuit before $e$ is trapped in a hole-free region is at most $n$.
Guarding each path requires $O(\diam(P)^2)$ moves for a minimal path, and
$O(k \cdot \diam(P)^2)$ moves when in a hole-free region with $k$ vertices.
Since the evader cannot reenter hole-free regions once they have been
guarded, the total cost of guarding all the hole-free regions during the
course of the algorithm sums to $O(n \cdot \diam(P)^2)$.  

Finally $e$ will be confined between two minimal paths in a hole-free
region consisting of three vertices, otherwise additional $u,v$ paths
can be found to further reduce the region. This sub-polygon clearly
has diameter no larger than $\diam(P)$, and thus the evader can be
captured in $O(\diam(P)^2)$ moves with the known result of
Lemma~\ref{lem:isler}, for a total of $O(n \cdot \diam(P)^2)$ moves
over the entire pursuit.
\end{proof}

\section{Necessity of 3 Pursuers} \label{sec:lower}

We show that any \emph{deterministic} strategy requires at least 3 pursuers
in the worst-case, and thus the upper bound of the previous section is tight.

\begin{theorem}
There exists an infinite family of polygons with holes that require at 
least three pursuers to capture an evader even with complete information 
about the evader's location.
\end{theorem}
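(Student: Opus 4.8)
The plan is to exhibit a parameterized construction and then argue that two pursuers provably lose on every sufficiently large member of the family, while the preceding Theorem~\ref{thm:main} guarantees that three always suffice; together these pin the number at exactly three. Concretely, I would start from a \emph{planar} graph $G$ whose cop number is three---such graphs are known to exist and are precisely the obstruction exploited by Aigner and Fromme~\cite{aigner}---and realize it geometrically by ``fattening'' it into a polygon with holes: replace each vertex of $G$ by a small junction chamber, replace each edge by a thin corridor of length $L \gg 1$, and let the bounded faces of $G$ become the holes. Rescaling so that every corridor is long relative to the unit step, and taking a sequence of graphs $G$ of increasing size (or repeated copies of a fixed gadget), yields the required \emph{infinite} family.

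The heart of the argument is a two-way correspondence between the continuous, equal-speed, discrete-time game in this polygon and the classical cops-and-robbers game on (a subdivision of) $G$. In one direction I would show that an evader who mimics a robber's winning strategy on $G$ stays safe: because corridors are long and narrow, in a single move no player can traverse more than a tiny fraction of a corridor, so a pursuer can threaten the evader---come within distance one---only when both occupy essentially the same corridor segment. The evader therefore maintains the invariant that it sits in a junction chamber through which some cycle of $G$ passes that avoids the ``controlled'' neighborhoods of both pursuers; the cop-number-three property of $G$ guarantees that with only two pursuers such an escape cycle always exists, and the evader uses it to choose a safe next chamber. Conversely, any geometric capture strategy for $k$ pursuers projects to a winning strategy for $k$ cops, so two geometric pursuers cannot succeed where two cops fail.

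The main obstacle I anticipate is making this correspondence \emph{faithful} despite the extra freedom the continuous setting grants near the junction chambers: a geometric pursuer is not confined to graph edges and could try to ``cut the corner'' inside a chamber, wait mid-corridor to set an ambush, or exploit the two-dimensional width of a corridor. Controlling this requires choosing the corridor width small enough, and the chamber diameter bounded enough, that any such maneuver still corresponds, up to a bounded additive slack, to a legal cop move on a suitably fine subdivision of $G$, and then verifying that this slack does not lower the effective cop number below three. I would handle the discrete-time, move-first technicality exactly as in the maintenance of the projection invariant used earlier (cf.\ Lemma~\ref{lem:singlepath}): the evader always responds to the pursuers' committed positions, so the only relevant quantity is how far a pursuer can advance along a corridor in one turn, which the scaling keeps strictly below one. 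Once the correspondence is established, the lower bound of three follows from the existence of cop-number-three planar graphs, and combined with Theorem~\ref{thm:main} gives the claimed tight bound.
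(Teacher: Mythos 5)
Your reduction is the same one the paper uses---realize a planar graph $G$ as a polygon whose edges become corridors and whose bounded faces become holes, then let the evader play the robber's side of the Aigner--Fromme obstruction~\cite{aigner}---but your scaling choice opens a genuine gap that the paper's scaling is specifically designed to avoid. By making every corridor have length $L \gg 1$, you turn the polygon game into (essentially) a cops-and-robbers game on a \emph{fine subdivision} of $G$, and your lower bound then rests on the claim that this subdivision still defeats two cops. That does not follow from the hypothesis that $G$ has cop number three: cop number is not invariant under subdivision (already subdividing each edge of $K_4$ once raises the cop number from $1$ to at least $2$), and the criterion that actually produces cop-number-three planar graphs---minimum degree at least $3$ and no cycles of length three or four---is destroyed by subdivision, since every subdivision vertex has degree $2$. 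So the step you defer (``verifying that this slack does not lower the effective cop number below three'') is not a technicality; it is the entire content of the lower bound, and neither the cited result nor the abstract cop-number property of $G$ supplies it. Relatedly, ``cop number three'' does not by itself yield your escape-cycle invariant; what Aigner and Fromme actually give is an explicit robber strategy---move only when some neighbor is occupied, and then move to a neighbor adjacent to no cop---whose existence is guaranteed by the degree and girth conditions, not by the cop number as an abstract parameter.

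The paper closes exactly this hole with the opposite scaling: each corridor is constructed so that the shortest path through it has length exactly $1$, i.e., one move of the game. A turn in the polygon then simulates a turn on $G$ itself (no subdivision ever enters), the evader executes the explicit strategy above while confining itself to the vertices, and the only new phenomenon to control is that pursuers may sit \emph{inside} corridors; a one-line argument shows that a pursuer inside a corridor threatens only that corridor's two endpoint vertices, which is no more (in fact less) than what it threatens from a vertex, so the discrete winning strategy transfers verbatim. If you want to salvage the long-corridor version, you would have to prove from scratch that two pursuers lose the continuous pursuit game on a metric graph with girth at least $5$ and minimum degree $3$---plausible, but a substantially harder statement than anything your outline establishes, and one that your appeal to the discrete cop number of $G$ cannot replace.
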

\begin{proof}
The proof is based on a reduction from searching in \emph{planar graphs}.
In particular, consider a planar graph $G$, with minimum degree $3$, and 
without any cycles of length three or four (see Figure~\ref{fig:dodecahedron}). 
Using Fary's Theorem, we can  embed such a graph so that each edge maps 
to a straight line segment. By suitable scaling, assume that the longest 
edge in the embedding has length 1. (See Figure~\ref{fig:embeddedGraph} for 
an example.)

We now transform this straight-line embedding into a polygon with holes, by converting 
each edge into a ``corridor.'' Each corridor is constructed to ensure that the shortest 
path through it has length $1$. In particular, the edges of length $1$ map to straight 
corridors, while shorter edges correspond to corridors with multiple turns, as shown in 
Figure~\ref{fig:planarConstruction}~\subref{fig:edgeConstruction}.  It is easy to see 
that such a construction can ensure that all the corridors are non-overlapping.
With this transformation, the outer face of the graph becomes the boundary of the 
polygon $P$, while each face of the plane graph becomes a hole.

\begin{figure}[htb]
\begin{center}
\subfigure[]{
\includegraphics[scale=.6]{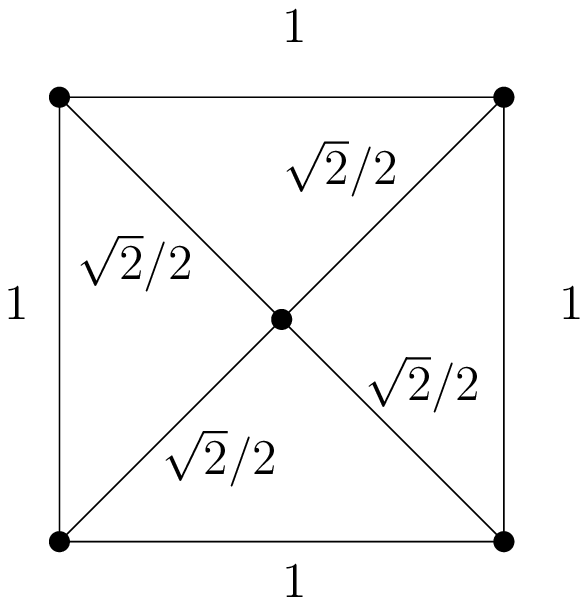}
\label{fig:embeddedGraph}
}
\subfigure[]{
\includegraphics[scale=.6]{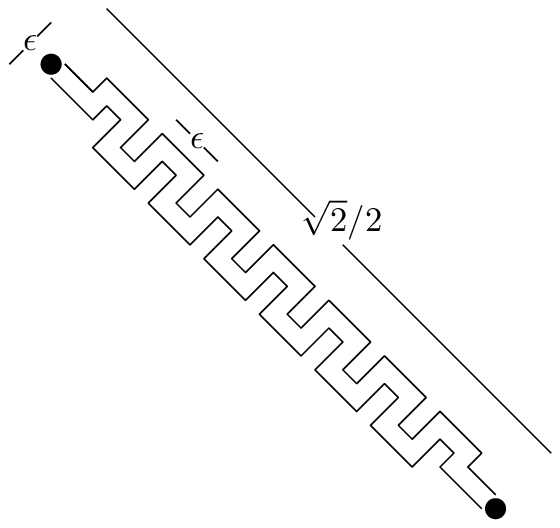}
\label{fig:edgeConstruction}
}
\subfigure[]{
\includegraphics[scale=.34]{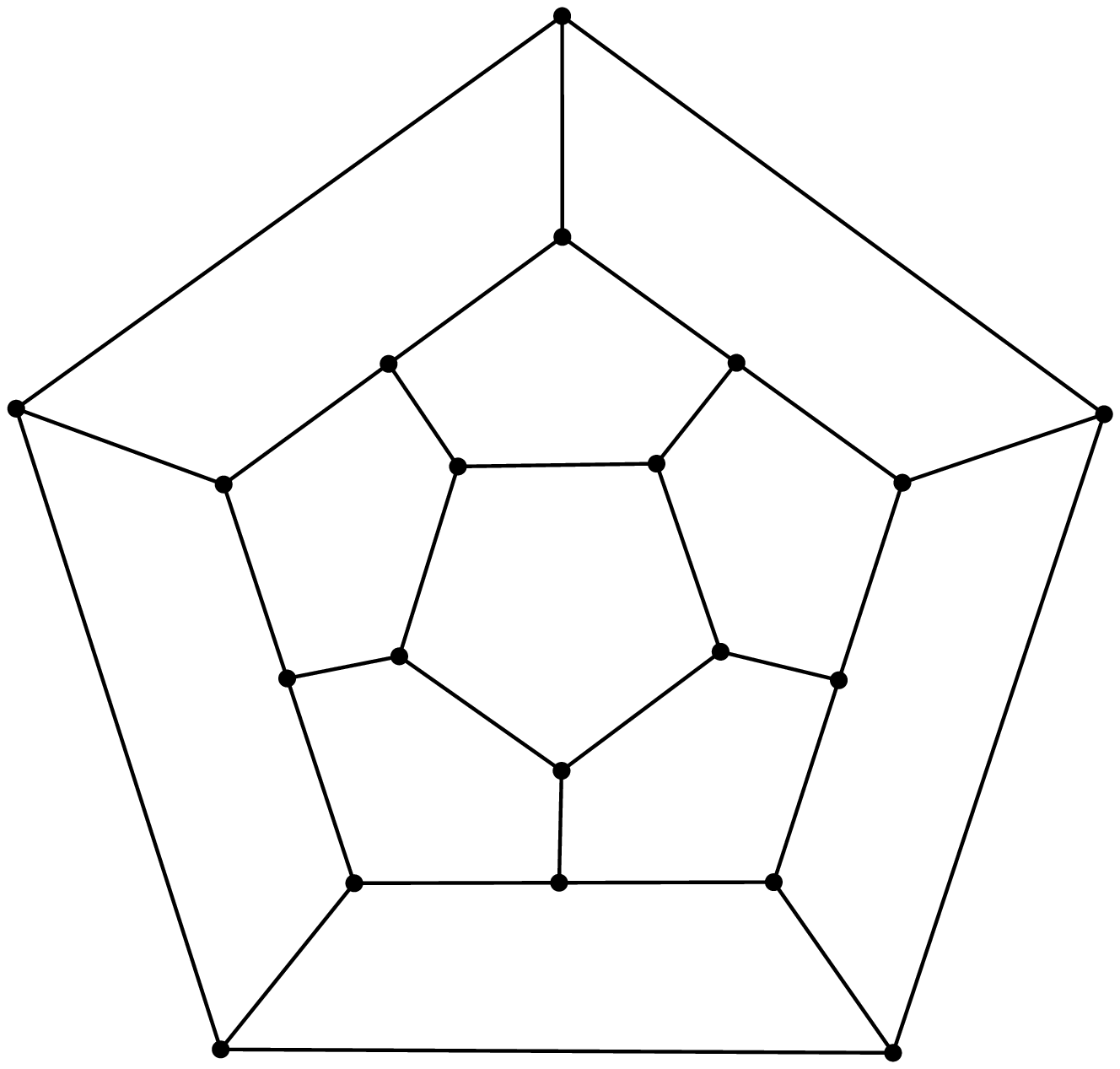}
\label{fig:dodecahedron}
}
\end{center}
\caption[toc entry]{Embedding of a planar graph \subref{fig:embeddedGraph}, 
corridor construction \subref{fig:edgeConstruction}, and
a planar graph with min-degree 3 and no three or four cycles \subref{fig:dodecahedron}.}
\label{fig:planarConstruction}
\end{figure}

It is known that in any graph with minimum degree $k$ and no cycles of 
length three or four, the evader has a winning strategy against $k-1$ 
pursuers~\cite{aigner}, as follows: the evader only moves when at least one 
of its neighbors is occupied by a pursuer, and in that case it moves to a 
vertex that is not a neighbor of the pursuer-occupied vertices.
We can mimic the same strategy in our polygon setting, and show that the evader
has a winning strategy against two pursuers.

The evader mimics the 
same strategy as on the original graph except now pursuers
can be located within corridors (essentially on edges in the graph setting),
as well as at vertices. However notice that if the evader confines
its movements to the vertices, and a pursuer is within a corridor
it can threaten only the vertices at each end. Where if the pursuer
was at either vertex it could still threaten both vertices and 
several more if the vertex had degree greater than one.
Thus the evader's strategy on the graph to avoid capture must still 
be viable as pursuer's gain no additional ability to threaten vertices
by being within corridors. In Figure~\ref{fig:dodecahedron} we see 
a example planar graph with no 3 or 4 cycles and all vertices of degree three, 
thus an evader can always avoid capture in the 
transformed polygon from two pursuers.

\end{proof}

\section{Closing Remarks} \label{sec:Conclusion}

In this paper, we proved that three pursuers are always sufficient to 
capture an evader in a polygonal environment of arbitrary complexity,
under the assumption that pursuers have access to evader's location at 
all times. We also proved a matching lower bound, showing that three
pursuers are also necessary in the worst-case. Traditionally, the papers 
on continuous space, visibility-based pursuit problem have focussed on 
simply detecting the evader, and not on capturing it. One of our contributions
is to isolate the \emph{intrinsic} complexity of the capture from the
associated complexity of detection or localization. In particular, while 
$\Theta (\sqrt{h} +\log n)$ pursuers are necessary (and also sufficient) 
for detection or localization of an evader in a $n$-vertex polygon with 
$h$ holes~\cite{Guibas99}, our result shows that full localization
information allows capture with only \emph{3} pursuers.
On the other hand, it still remains an intriguing open problem
whether $\Theta (\sqrt{h} +\log n)$ pursuers can \emph{simultaneously}
perform localization and capture. We leave that as a topic for future
research.

\bibliography{main}{}
\bibliographystyle{plain}

\end{document}